\newcommand{\figref}[1]{Figure~\ref{fig:#1}}
\newcommand{\sectref}[1]{\S\ref{sec:#1}}
\newcommand{\ie}{i.e., }
\newcommand{\key}{k}
\newcommand{\Keys}{\mathsf{Keys}}
\newcommand{\CO}{\textcolor{red}{\textsf{co} }}
\newcommand{\set}[1]{{\{{#1}\}}}
\newcommand{\tup}[1]{{\left\langle{#1}\right\rangle}}
\renewcommand{\implies}{\Rightarrow}
\newcommand{\Vars}{\mathsf{Vars}}
\newcommand{\Val}{\mathsf{Vals}}
\newcommand{\Vals}{\mathsf{Vals}}
\newcommand{\Columns}{\mathbb{C}}
\newcommand{\Rows}{\mathbb{R}}
\newcommand{\Tables}{\mathbb{T}}
\newcommand{\OId}{\mathsf{OpId}}
\newcommand{\Op}{\mathsf{Op}}
\newcommand{\xvar}{{x}}
\newcommand{\prog}{{\mathsf{P}}}
\newcommand{\histOf}[2][]{{\mathsf{hist}_{#1}({#2})}}
\newcommand{\yvar}{{y}}
\newcommand{\val}{{v}}
\newcommand{\id}{{i}}
\newcommand{\rd}[3][]{\textsf{read}_{#1}({#2},{#3})}
\newcommand{\rdo}{\textsf{read}}
\newcommand{\wrt}[3][]{\textsf{write}_{#1}({#2},{#3})}
\newcommand{\tr}{{t}}
\newcommand{\hist}{{h}}
\newcommand{\po}{\textcolor{red}{\mathsf{po}}}
\newcommand{\so}{\textcolor{red}{\mathsf{so}}}
\newcommand{\co}{\textcolor{red}{\mathsf{co}}}
\newcommand{\wro}[1][]{\textcolor{red}{\mathsf{wr}_{#1}}}
\newcommand{\readOp}[1]{\mathsf{reads}({#1})}
\newcommand{\tlogs}[1]{\mathsf{TLogs}({#1})}
\newcommand{\transC}[1]{\mathsf{compTrans}({#1})}
\newcommand{\writeOp}[1]{\mathsf{writes}({#1})}
\newcommand{\writeVar}[2]{{#1}\ \mathsf{writes}\ {#2}}
\renewcommand{\hist}{{h}}
\newcommand{\btrue}{\mathsf{true}}
\newcommand{\bfalse}{\mathsf{false}}
\newcommand{\eqdef}{::=}
\newcommand{\tab}{\mathit{tab}}
\newcommand{\pkey}{\mathit{pkey}}
\newcommand{\pkeyVal}{\mathit{pkeyVal}}
\newcommand{\ibegin}{\mathtt{begin}}
\newcommand{\iadd}{\mathtt{add}}
\newcommand{\iremove}{\mathtt{remove}}
\newcommand{\ielements}{\mathtt{elements}}
\newcommand{\icontains}{\mathtt{has}}
\newcommand{\icommit}{\mathtt{commit}}
\newcommand{\iif}[2]{\mathtt{if}({#1}) \{ {#2} \}}
\newcommand{\iread}{\mathtt{read}}
\newcommand{\iwrite}{\mathtt{write}}
\newcommand{\iselect}[4]{\mathtt{SELECT}\ {#1}\ \mathtt{AS}\ {#2}\ \mathtt{FROM}\ {#3}\ \mathtt{WHERE}\ {#4}}
\newcommand{\iinsert}[2]{\mathtt{INSERT\ INTO}\ {#1}\ \mathtt{VALUES}\ {#2}}
\newcommand{\idelete}[2]{\mathtt{DELETE\ FROM}\ {#1}\ \mathtt{WHERE}\ {#2}}
\newcommand{\iupdate}[3]{\mathtt{UPDATE}\ {#1}\ \mathtt{SET}\ {#2}\ \mathtt{WHERE}\ {#3}}
\newcommand{\KVProgs}{\mathcal{P}_{KV}}
\newcommand{\SQLProgs}{\mathcal{P}_{SQL}}
\newcommand{\DBschema}{\mathcal{S}}
\newcommand{\DBinst}{\mathcal{D}}
\lstdefinelanguage{MyLang}{%
  keywords = { delete, do, each, else, export, finally, for, foreach, function, if, in, let, of, return, void, while, with, yield, elements, read, write, insert, remove, add, AddItem, DeleteItem, Push, Pop, Enroll, Tweet, Timeline, NewsFeed, Begin, Commit},
  morecomment = [l]{//},
  morecomment = [s]{/*}{*/},
  morestring  = [b]',
  morestring  = [b]",
  sensitive   = true,
}
\lstdefinelanguage{Java10}{
  language      = Java,
  morekeywords  ={ var },
}
\def\tool{\mbox{MonkeyDB}}
\begin{document}

\title[~]{MonkeyDB: Effectively Testing Correctness against Weak Isolation Levels}         
\author{Ranadeep Biswas}
\affiliation{
  \institution{IRIF, University of Paris \& CNRS} 
  \country{France}                    
}
\email{ranadeep@irif.fr}          

\author{Diptanshu Kakwani}
\affiliation{
  \institution{IIT Madras} 
  \country{India}                    
}
\email{dipk@cse.iitm.ac.in}          

\author{Jyothi Vedurada}
\affiliation{
  \institution{Microsoft Research} 
  \country{India}                    
}
\email{t-vevedu@microsoft.com}          

\author{Constantin Enea}
\affiliation{
  \institution{IRIF, University of Paris \& CNRS} 
  \country{France}                    
}
\email{cenea@irif.fr}          

\author{Akash Lal}
\affiliation{
  \institution{Microsoft Research} 
  \country{India}                    
}
\email{akashl@microsoft.com}          

\begin{abstract}
Modern applications, such as social networking systems and e-commerce platforms
are centered around using large-scale storage systems for storing and retrieving
data. In the presence of concurrent accesses, these storage systems trade off isolation
for performance. The weaker the isolation level, the more behaviors a storage
system is allowed to exhibit and it is up to the developer to ensure that their
application can tolerate those behaviors. However, these weak behaviors only
occur rarely in practice, that too outside the control of the application, 
making it difficult for developers to test the robustness of their 
code against weak isolation levels. 

This paper presents MonkeyDB, a mock storage system for testing storage-backed
applications. MonkeyDB supports a Key-Value interface as well as SQL queries
under multiple isolation levels. It uses a logical specification of the isolation
level to compute, on a read operation, the set of all possible return values.
MonkeyDB then returns a value randomly from this set. 
We show that MonkeyDB provides 
good coverage of weak behaviors, which is complete in the limit. We test a
variety of applications for assertions that fail only under weak isolation.
MonkeyDB is able to break each of those assertions in a small number of attempts. 

\end{abstract}

\begin{CCSXML}
<ccs2012>
<concept>
<concept_id>10011007.10011074.10011099.10011102.10011103</concept_id>
<concept_desc>Software and its engineering~Software testing and debugging</concept_desc>
<concept_significance>500</concept_significance>
</concept>
</ccs2012>
\end{CCSXML}

\ccsdesc[500]{Software and its engineering~Software testing and debugging}

\keywords{Weak Isolation Levels, Storage Systems, Causal Consistency, Read Committed, Randomized Testing}  
\maketitle

\sloppy
\section{Introduction}
\label{sec:intro}

Data storage is no longer about writing data to a single
disk with a single point of access. Modern applications require not just data
reliability, but also high-throughput concurrent accesses. 
Applications concerning supply chains, banking, etc. use traditional relational databases
for storing and processing data, whereas applications such as social networking
software and e-commerce platforms 
use cloud-based storage systems (such as Azure CosmosDb \cite{cosmosdb}, Amazon DynamoDb
\cite{decandia2007dynamo}, Facebook TAO \cite{facebook-tao}, etc.). We use the term \textit{storage
system} in this paper to refer to any such database system/service.

 

Providing high-throughput processing, unfortunately, comes at an unavoidable cost of weakening 
the guarantees offered to users.
Concurrently-connected clients may end up observing different views of the same data. 
These ``anomalies'' can be prevented by using a strong \textit{isolation level} 
such as \textit{serializability}, which essentially offers a single view of the
data. However, serializability requires expensive synchronization and incurs a high performance cost. As a
consequence, most storage systems use weaker isolation levels, such as 
{\it Causal Consistency}~\cite{DBLP:journals/cacm/Lamport78,DBLP:conf/sosp/LloydFKA11,antidote},
{\it Snapshot Isolation}~\cite{DBLP:conf/sigmod/BerensonBGMOO95}, {\it Read
Committed}~\cite{DBLP:conf/sigmod/BerensonBGMOO95}, etc. for better performance.
In a recent survey of
database administrators \cite{survey}, 86\% of the participants responded that
most or all of the transactions in their databases execute at read committed isolation level.

\begin{figure}
	\begin{minipage}{4.2cm}
		\begin{lstlisting}[basicstyle=\ttfamily\footnotesize,escapeinside={(*}{*)},language=MyLang]
// Append item to cart
AddItem(item i, userId) {
  Begin()
  key = "cart:" + userId
  cart = read(key)
  cart.append(i)
  write(key, cart)
  Commit()
}
		\end{lstlisting}
	\end{minipage}
	\hspace{-5mm}
	\begin{minipage}{4.2cm}
		\begin{lstlisting}[xleftmargin=4mm,basicstyle=\ttfamily\footnotesize,escapeinside={(*}{*)},language=MyLang]
// Fetch cart and delete item
DeleteItem(item i, userId) {
  Begin()
  key = "cart:" + userId
  cart = read(key)
  cart.remove(i)
  write(key, cart)
  Commit()
}
		\end{lstlisting}
	\end{minipage}
	
\vspace{-6mm}	
  \resizebox{8.5cm}{!}{
   \begin{tikzpicture}[->,>=stealth',shorten >=1pt,auto,node distance=4cm,
     semithick, transform shape]
    \node (s11l) at (1.15, 2.1) {\textbf{Initial state}};
    \node[draw, rounded corners=2mm] (t0) at (2.05, 1.5) {\begin{tabular}{l} $\wrt{\texttt{cart:}u}{\{..\, I\, ..\}}$ \end{tabular}};
    \node[draw, rounded corners=2mm, minimum width=3.6cm, minimum height=1.3cm] (s1) at (0, -0.1) {};
    \node[style={inner sep=0,outer sep=0}] (s11) at (0, 0.3) {\begin{tabular}{l} $\rd{\texttt{cart:}u}{\{..\, I\, ..\}}$\end{tabular}};
    \node[style={inner sep=0,outer sep=0}] (s12) at (0, -0.5) {\begin{tabular}{l} $\wrt{\texttt{cart:}u}{\{..\, I,I\, ..\}}$ \end{tabular}};
    \node (s11l) at (-1, 0.8) {\textbf{AddItem}};
    \node[draw, rounded corners=2mm, minimum width=3.6cm, minimum height=1.3cm] (s2) at (4.1, -0.1) {};
    \node[style={inner sep=0,outer sep=0}] (s21) at (4.1, 0.3) {\begin{tabular}{l} $\rd{\texttt{cart:}u}{\{..\, I\, ..\}}$ \end{tabular}};
    \node[style={inner sep=0,outer sep=0}] (s22) at (4.1, -0.5) {\begin{tabular}{l} $\wrt{\texttt{cart:}u}{\{..\, ..\}}$ \end{tabular}};
    \node (s11l) at (4.9, 0.8) {\textbf{DeleteItem}};
    \node[draw, rounded corners=2mm] (r1) at (8.3, 0) {\begin{tabular}{l} $\rd{\texttt{cart:}u}{\{..\, ..\}}$ \end{tabular}};
    \node[draw, rounded corners=2mm] (r2) at (8.3, -1.3) {\begin{tabular}{l} $\rd{\texttt{cart:}u}{\{..\, I, I\, .\}}$ \end{tabular}};
    \path (s11) edge[left] node {$\po$} (s12);
    \path (s21) edge[left] node {$\po$} (s22);
    \path (t0) edge[left] node {$\wro$} (s1);
    \path (t0) edge[right] node {$\wro$} (s2);
    \path (r1) edge[left] node {$\so$} (r2);
    \path (s2) edge[above] node {$\wro$} (r1);
    \path (s1) edge[below,bend right=11] node {$\wro$} (r2);
   \end{tikzpicture}  
  }

\vspace{-2mm}
	\caption{A simple shopping cart service.}
	\label{fig:motiv}
\vspace{-3mm}
\end{figure}
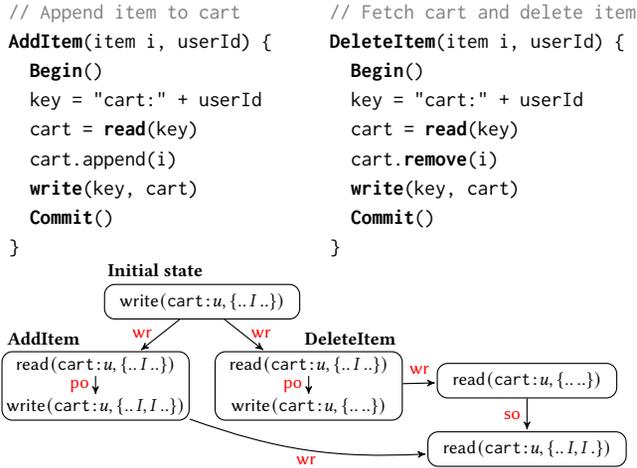

A weaker isolation level allows for more possible behaviors than stronger
isolation levels. It is up to the developers then to ensure that their
application can tolerate this larger set of behaviors. Unfortunately, weak
isolation levels are hard to understand or reason about
\cite{DBLP:conf/popl/BrutschyD0V17,adya-thesis} and resulting application bugs
can cause loss of business \cite{acidrain}.
Consider a simple shopping cart application that stores a per-client shopping
cart in a key-value store (\textit{key} is the client ID and \textit{value} is a
multi-set of items). \figref{motiv} shows procedures for adding an item to the cart
(\texttt{AddItem}) and deleting \textit{all} instances of an item from the cart
(\texttt{DeleteItem}). Each procedure executes in a transaction, represented by
the calls to \texttt{Begin} and \texttt{Commit}. Suppose that initially, a user $u$ has 
a single instance of item $I$ in their cart.
Then the user connects to the application via two different
sessions (for instance, via two browser windows), adds $I$ in one session
(\texttt{AddItem($I$, $u$)}) and deletes $I$ in the other session
(\texttt{DeleteItem($I$, $u$)}). With serializability, the cart can either be
left in the state $\{ I \}$ (delete happened first, followed by the add) or $\emptyset$ (delete
happened second). However, with causal consistency (or read committed), it is possible that with two
sequential reads of the shopping cart, the cart is empty in the first read
(signaling that the delete has succeeded), but there are \textit{two} instances of $I$ 
in the second read! Such anomalies, of deleted items reappearing, have been
noted in previous work \cite{decandia2007dynamo}. 

\paragraph{Testing storage-based applications}
This paper addresses the problem of \textit{testing} code for correctness
against weak behaviors: a developer should be able to write a test that runs
their application and then asserts for correct behavior. 
The main difficulty today is getting coverage of weak behaviors during
the test. If one runs the test
against the actual production storage system, it is very likely to only result in
serializable behaviors because of their optimized implementation. For
instance, only 0.0004\% of all reads performed on Facebook's TAO storage system 
were not serializable \cite{facebook-consistency}. 
Emulators, offered by cloud providers for local development, on the other hand, do not support weaker
isolation levels at all \cite{cosmosdb-local}. Another option, possible when
the storage system is available open-source, is to set it up with a 
tool like Jepsen~\cite{jepsen} to inject noise (bring down replicas or
delay packets on the network). 
This approach is unable to provide good coverage at the level of client operations
\cite{clotho} (\sectref{oltp}). Another line of work has focussed on finding
anomalies by identifying non-serializable behavior (\sectref{related}). Anomalies, however, do not
always correspond to bugs \cite{DBLP:conf/pldi/BrutschyD0V18,isodiff}; they may
either not be important (e.g., gather statistics) or may already be handled in
the application (e.g., checking and deleting duplicate items).

We present MonkeyDB, a mock in-memory storage system meant for testing
correctness of storage-backed applications. 
MonkeyDB supports 
common APIs for accessing data (key-value updates, as well as SQL queries),
making it an easy substitute for an actual storage system. MonkeyDB
can be configured with one of several isolation levels. 
On a read operation, MonkeyDB computes the set of all possible return values
allowed under the chosen isolation level, and randomly returns one of them. The
developer can then simply execute their test multiple times to get coverage of
possible weak behaviors. For the program in \figref{motiv}, if we write a test
asserting that two sequential reads cannot return empty-cart followed by $\{I,
I\}$, then it takes only 20 runs of the test (on average) to fail the
assert. In contrast, the test does not fail when using MySQL with read committed, 
even after 100k runs. 

\paragraph{Design of MonkeyDB}
MonkeyDB does not rely on stress generation, fault
injection, or data replication. 
Rather, it works directly with a formalization of
the given isolation level in order to compute allowed return values.

The theory behind MonkeyDB builds on the axiomatic definitions of isolation
levels introduced by Biswas et al. \cite{DBLP:journals/pacmpl/BiswasE19}. These
definitions use logical constraints (called \emph{axioms}) to characterize the
set of executions of a key-value store that conform to a particular isolation
level (we discuss SQL queries later). These constraints refer to a specific set of
relations between events/transactions in an execution that describe control-flow
or data-flow dependencies: a program order $\po$ between events in the same
transaction, a session order $\so$ between transactions in the same session\footnote{A
session is a sequential interface to the storage system. It corresponds to what
is also called a connection.}, and a write-read $\wro$ (read-from) relation that
associates each read event with a transaction that writes the value returned by
the read. These relations along with the events (also called, operations) in an
execution are called a \emph{history}. The history corresponding to the 
shopping cart anomaly explained above is given on the bottom of Figure~\ref{fig:motiv}.
Read operations include the read value, and boxes group events from the same transaction.
A history describes only the
interaction with the key-value store, omitting application side events (e.g., computing
the value to be written to a key). 

MonkeyDB implements a \emph{centralized} operational semantics for key-value stores, which is based on these axiomatic definitions. Transactions are executed \emph{serially}, one after another, the concurrency being simulated during the handling of read events.  
This semantics maintains a history that contains all the past events (from all
transactions/sessions), and write events are simply added to the history. The
value returned by a read event is established based on a non-deterministic
choice of a write-read dependency (concerning this read event) that satisfies
the axioms of the considered isolation level.
Depending on the weakness of the isolation
level, this makes it possible to return values written in arbitrarily ``old''
transactions, and simulate any concurrent behavior. For instance, the history in Figure~\ref{fig:motiv}
can be obtained by executing \texttt{AddItem}, \texttt{DeleteItem}, and then the two reads (serially).
The read in \texttt{DeleteItem} can take its value from the initial state and ``ignore'' the
previously executed \texttt{AddItem}, because the obtained history validates the axioms of 
causal consistency (or read committed). The same happens for the two later reads in the same
session, the first one being able to read from \texttt{DeleteItem} and the second one
from \texttt{AddItem}.

We formally prove that this semantics does indeed simulate any concurrent behavior, by 
showing that it is equivalent to a semantics where transactions are allowed to interleave.
In comparison with concrete implementations, this semantics makes it possible to handle 
a wide range of isolation levels in a uniform way. It only has two sources of
non-determinism: 
the order in which entire transactions are submitted, and the choice of write-read dependencies in read 
events. This enable better coverage of possible behaviors, the penalty in performance not
being an issue in safety testing workloads which are usually small (see our evaluation).

We also extend our semantics to cover SQL queries as well, by compiling SQL queries down to transactions with multiple key-value reads/writes. A table in a relational database is represented using a set of primary key values (identifying uniquely the set of rows) and a set of keys, one for each cell in the table. The set of primary key values is represented using a set of Boolean key-value pairs that simulate its characteristic function (adding or removing an element corresponds to updating one of these keys to $\btrue$ or $\bfalse$). Then, SQL queries are compiled to read or write accesses to the keys representing a table. For instance, a $\mathtt{SELECT}$ query that retrieves the set of rows in a table that satisfy a $\mathtt{WHERE}$ condition is compiled to (1) reading Boolean keys to identify the primary key values of the rows contained in the table, (2) reading keys that represent columns used in the $\mathtt{WHERE}$ condition, and (3) reading all the keys that represent cells in a row satisfying the $\mathtt{WHERE}$ condition. This rewriting contains the minimal set of accesses to the cells of a table that are needed to ensure the conventional specification of SQL.
It makes it possible to ``export'' formalizations of key-value store isolation levels to SQL transactions.


\paragraph{Contributions}

This paper makes the following contributions:
\begin{itemize}
\item We define an operational semantics for key-value stores under various
  isolation levels, which simulates all concurrent behaviors with executions
  where transactions execute serially (\sectref{op-kv}) and which is based 
  on the axiomatic definitions in~\cite{DBLP:journals/pacmpl/BiswasE19} (and outlined in \S\ref{sec:ax-kv}),
\item We broaden the scope of the key-value store semantics to SQL transactions
  using a compiler that rewrites SQL queries to key-value accesses (\sectref{SQL-to-KV}),
\item The operational semantics and the SQL compiler are implemented in a tool
  called MonkeyDB (\sectref{impl}). It randomly resolves possible choices to provide coverage
  of weak behaviors. It supports both a key-value interface as well as SQL,
  making it readily compatible with any storage-backed application.
\item We present an evaluation of MonkeyDB on several applications, showcasing
its superior coverage of weak behaviors as well as bug-finding abilities
(\sectref{micro}, \sectref{oltp}).\footnote{Source code of our benchmarks is available
as supplementary material.}
\end{itemize}

\section{Programming Language}

\begin{figure}
\small
\begin{align*}
\key\in \Keys\quad \xvar\in\Vars\quad \tab\in\Tables\quad \vec{c},\vec{c_1},\vec{c_2}\in \Columns^*
\end{align*}
\begin{align*}
\mathsf{Prog} &  \eqdef  \mathsf{Sess} \ \mid\  \mathsf{Sess}\,||\,\mathsf{Prog} \\
\mathsf{Sess} & \eqdef  \mathsf{Trans} \ \mid\  \mathsf{Trans}; \mathsf{Sess} \\
\mathsf{Trans} & \eqdef  \ibegin; \mathsf{Body}; \icommit \\
\mathsf{Body} & \eqdef  \mathsf{Instr} \ \mid\  \mathsf{Instr}; \mathsf{Body} \\
\mathsf{Instr} & \eqdef  \mathsf{InstrKV} \ \mid\  \mathsf{InstrSQL}\ \mid\  x := e \mid\ \iif{\phi(\vec{x})}{\mathsf{Instr}} \\
\mathsf{InstrKV} & \eqdef \xvar := \iread(\key)  \ \mid\  \iwrite(\key,\xvar) \\
\mathsf{InstrSQL} & \eqdef  \iselect{\vec{c_1}}{\xvar}{\tab}{\phi(\vec{c_2})} \ \mid\ \\
& \hspace{5mm} \iinsert{\tab}{\vec{x}} \ \mid\ \\
& \hspace{5mm} \idelete{\tab}{\phi(\vec{c})} \ \mid\ \\
& \hspace{5mm} \iupdate{\tab}{\vec{c_1}=\vec{x}}{\phi(\vec{c_2})} 
\end{align*}
\vspace{-6mm}
\caption{Program syntax. The set of all keys is denoted by $\Keys$, $\Vars$ denotes the set of local variables, $\Tables$ the set of table names, and $\Columns$ the set of column names.
We use $\phi$ to denote Boolean expressions, and $e$ to denote expressions interpreted as values. We use $\vec{\cdot}$ to denote vectors of elements.}
\label{fig:syntax}
\vspace{-4mm}
\end{figure}

Figure~\ref{fig:syntax} lists the definition of two simple programming languages that we use to represent applications running on top of Key-Value or SQL stores, respectively. A program is a set of \emph{sessions} running in parallel, each session being composed of a sequence of \emph{transactions}. Each transaction is delimited by $\ibegin$ and $\icommit$ instructions\footnote{For simplicity, we assume that all the transactions in the program commit. Aborted transactions can be ignored when reasoning about safety because their effects should be invisible to other transactions.}, and its body contains instructions that access the store, and manipulate a set of local variables $\Vars$ ranged over using $\xvar$, $\yvar$, $\ldots$.

In case of a program running on top of a Key-Value store, the instructions can be reading the value of a key and storing it to a local variable $\xvar$ ($\xvar := \iread(\key)$) , writing the value of a local variable $\xvar$ to a key ($\iwrite(\key,\xvar)$), or an assignment to a local variable $\xvar$. The set of values of keys or local variables is denoted by $\Vals$. Assignments to local variables use expressions interpreted as values whose syntax is left unspecified. Each of these instructions can be guarded by a Boolean condition $\phi(\vec{x})$ over a set of local variables $\vec{x}$ (their syntax is not important). Other constructs like $\mathtt{while}$ loops can be defined in a similar way. Let $\KVProgs$ denote the set of programs where a transaction body can contain only such instructions.

For programs running on top of SQL stores, the instructions include simplified
versions of standard SQL instructions and assignments to local variables. These
programs run in the context of a \emph{database schema} which is a (partial)
function $\DBschema:\Tables\rightharpoonup 2^\Columns$ mapping table names in
$\Tables$ to sets of column names in $\Columns$. The SQL store is an
\emph{instance} of a database schema $\DBschema$, i.e., a function $\DBinst:
\mathsf{dom}(\DBschema)\rightarrow 2^{\Rows}$ mapping each table $\tab$ in the
domain of $\DBschema$ to a set of \emph{rows} of $\tab$, i.e., functions $r:\DBschema(\tab)\rightarrow\Vals$. We use $\Rows$ to denote the set of all rows.
The $\mathtt{SELECT}$ instruction retrieves the columns $\vec{c_1}$ from the set of rows of $\tab$ that satisfy $\phi(\vec{c_2})$ ($\vec{c_2}$ denotes the set of columns used in this Boolean expression), and stores them into a variable $\xvar$. $\mathtt{INSERT}$ adds a new row to $\tab$ with values $\vec{x}$, and $\mathtt{DELETE}$ deletes all rows from $\tab$ that satisfy a condition $\phi(\vec{c})$. The $\mathtt{UPDATE}$ instruction assigns the columns $\vec{c_1}$ of all rows of $\tab$ that satisfy $\phi(\vec{c_2})$ with values in $\vec{x}$.
Let $\SQLProgs$ denote the set of programs where a transaction body can contain only such instructions.

\section{Isolation Levels for Key-Value Stores}
\label{sec:ax-kv}

%
%
%

We present the axiomatic framework introduced in~\cite{DBLP:journals/pacmpl/BiswasE19} for defining isolation levels\footnote{Isolation levels are called consistency models in~\cite{DBLP:journals/pacmpl/BiswasE19}.} in Key-Value stores. Isolation levels are defined as logical constraints, called \emph{axioms}, over \emph{histories}, which are an abstract representation of the interaction between a program and the store in a concrete execution. 


\subsection{Histories}

Programs interact with a Key-Value store by issuing transactions formed of $\textsf{read}$ and $\textsf{write}$ instructions. The effect of executing one such instruction is represented using an \emph{operation}, which is an element of the set
\begin{align*}
 \Op=\set{\rd[\id]{\key}{\val},\wrt[\id]{\key}{\val}: \id\in\OId, \key\in\Keys, \val\in \Val}
\end{align*} 
where $\rd[\id]{\key}{\val}$ (resp., $\wrt[\id]{\key}{\val}$) corresponds to reading a value $\val$ from a key $\key$ (resp., writing $\val$ to $\key$). Each operation is associated with an identifier $\id$ from an arbitrary set $\OId$. We omit operation identifiers when they are not important.

\begin{definition}
 A \emph{transaction log} $\tup{\tr,O, \po}$ is a transaction identifier $\tr$ and a finite set of operations $O$ along with a strict total order $\po$ on $O$, called \emph{program order}.
\end{definition}

The program order $\po$ represents the order between instructions in the body of a transaction. We assume that each transaction log is well-formed in the sense that if a read of a key $k$ is preceded by a write to $\key$ in $\po$, then it should return the value written by the last write to $\key$ before the read (w.r.t. $\po$). This property is implicit in the definition of every isolation level that we are aware of. For simplicity, we may use the term \emph{transaction} instead of transaction log. The  set of all transaction logs is denoted by $\mathsf{Tlogs}$.


The set of read operations $\rd{\key}{\_}$ in a transaction log $\tr$ that are \emph{not} preceded by a write to $\key$ in $\po$ is denoted by $\readOp{\tr}$. As mentioned above, the other read operations take their values from writes in the same transaction and their behavior is independent of other transactions. Also, the set of write operations $\wrt{\key}{\_}$ in $\tr$ that are \emph{not} followed by other writes to $\key$ in $\po$ is denoted by $\writeOp{\tr}$. If a transaction contains multiple writes to the same key, then only the last one (w.r.t. $\po$) can be visible to other transactions (w.r.t. any isolation level that we are aware of). The extension to sets of transaction logs is defined as usual. 
Also, we say that a transaction log $\tr$ \emph{writes} a key $\key$, denoted by $\writeVar{\tr}{\key}$, when $\wrt[\id]{\key}{\val}\in \writeOp{\tr}$ for some $\id$ and $\val$. 

A \emph{history} contains a set of transaction logs (with distinct identifiers) ordered by a (partial) \emph{session order} $\so$ that represents the order between transactions in the same session\footnote{In the context of our programming language, $\so$ would be a union of total orders. This constraint is not important for defining isolation levels.}. It also includes a
\emph{write-read} relation (also called read-from) that ``justifies'' read values by associating each read to a transaction that wrote the value returned by the read.

\begin{definition}
 A \emph{history} $\tup{T, \so, \wro}$ is a set of transaction logs $T$ along with a strict partial \emph{session order} $\so$, and a 
\emph{write-read} relation $\wro\subseteq T\times \readOp{T}$ such that
the inverse of $\wro$ is a total function, and if $(\tr,\rd{\key}{\val})\in\wro$, then $\wrt{\key}{\val}\in\tr$, and $\so\cup\wro$ is acyclic.
\end{definition}



To simplify the technical exposition, we assume that every history includes a distinguished transaction log writing the initial values of all keys. This transaction log precedes all the other transaction logs in $\so$. We use $\hist$, $\hist_1$, $\hist_2$, $\ldots$ to range over histories. The set of transaction logs $T$ in a history $\hist=\tup{T, \so, \wro}$ is denoted by $\tlogs{\hist}$.

For a key $\key$, $\wro[\key]$ denotes the restriction of $\wro$ to reads of $\key$, \ie, $\wro[\key]=\wro\cap (T\times \{\rd{\key}{\val}\mid \val\in \Val\})$. Moreover, we extend the relations $\wro$ and $\wro[\key]$ to pairs of transactions by $\tup{\tr_1,\tr_2}\in \wro$, resp., $\tup{\tr_1,\tr_2}\in \wro[\key]$, iff there exists a read operation $\rd{\key}{\val}\in \readOp{\tr_2}$ such that $\tup{\tr_1,\rd{\key}{\val}}\in \wro$, resp., $\tup{\tr_1,\rd{\key}{\val}}\in \wro[\key]$. 
We say that the transaction log $\tr_1$ is \emph{read} by the transaction log $\tr_2$ when $\tup{\tr_1,\tr_2}\in \wro$. 

\subsection{Axiomatic Framework}


\tikzset{transaction state/.style={draw=black!0}}

 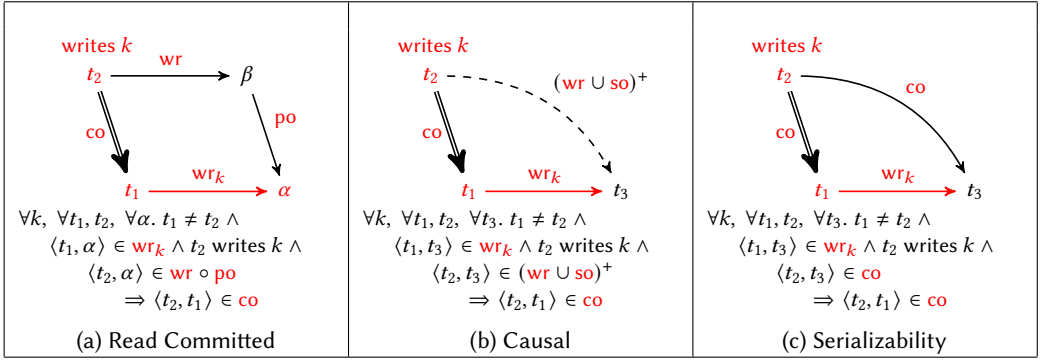
\begin{figure*}
   \resizebox{\textwidth}{!}{
   \footnotesize
  \begin{tabular}{|c|c|c|}
   \hline &  & \\
   \begin{subfigure}[t]{.3\textwidth}
    \centering
    \begin{tikzpicture}[->,>=stealth',shorten >=1pt,auto,node distance=1cm,
      semithick, transform shape]
     \node[transaction state, text=red] at (0,0)       (t_1)           {$\tr_1$};
     \node[transaction state, text=red, label={above:\textcolor{red}{$\writeVar{ }{\key}$}}] at (-0.5,1.5) (t_2) {$\tr_2$};
     \node[transaction state, text=red] at (2,0)       (o_1)           {$\alpha$};
     \node[transaction state] at (1.5,1.5) (o_2) {$\beta$};
     \path (t_1) edge[red] node {$\wro[\key]$} (o_1);
     \path (t_2) edge node {$\wro$} (o_2);
     \path (o_2) edge node {$\po$} (o_1);
     \path (t_2) edge[left,double] node {$\co$} (t_1);
    \end{tikzpicture}
    \parbox{\textwidth}{
     $\forall \key,\ \forall \tr_1, \tr_2,\ \forall \alpha.\ \tr_1\neq \tr_2\ \land$
     
     \hspace{4mm}$\tup{\tr_1,\alpha}\in \wro[\key] \land \writeVar{\tr_2}{\key}\ \land$ 
     
     \hspace{9mm}$\tup{\tr_2,\alpha}\in\wro\circ\po$
     
     \hspace{14mm}$\implies \tup{\tr_2,\tr_1}\in\co$
    }
    
    \caption{$\mathsf{Read\ Committed}$}
    \label{lock_rc_def}
   \end{subfigure}
   
%
%
%
%
%
   
   &
   
   \begin{subfigure}[t]{.3\textwidth}
    \centering
    \begin{tikzpicture}[->,>=stealth',shorten >=1pt,auto,node distance=4cm,
      semithick, transform shape]
     \node[transaction state, text=red] at (0,0)       (t_1)           {$\tr_1$};
     \node[transaction state] at (2,0)       (t_3)           {$\tr_3$};
     \node[transaction state, text=red,label={above:\textcolor{red}{$\writeVar{ }{\key}$}}] at (-.5,1.5) (t_2) {$\tr_2$};
     \path (t_1) edge[red] node {$\wro[\key]$} (t_3);
     \path (t_2) edge[dashed, bend left] node {$(\wro \cup \so)^+$} (t_3);
     \path (t_2) edge[left,double] node {$\co$} (t_1);
    \end{tikzpicture}
    \parbox{\textwidth}{
     $\forall \key,\ \forall \tr_1, \tr_2,\ \forall \tr_3.\ \tr_1\neq \tr_2\ \land$
     
     \hspace{4mm}$\tup{\tr_1,\tr_3}\in \wro[\key] \land \writeVar{\tr_2}{\key}\ \land$ 
     
     \hspace{9mm}$\tup{\tr_2,\tr_3}\in(\wro\cup\so)^+$
     
     \hspace{14mm}$\implies \tup{\tr_2,\tr_1}\in\co$
    }
    
    \caption{$\mathsf{Causal}$}
    \label{cc_def}
   \end{subfigure}

          &     
   \begin{subfigure}[t]{.3\textwidth}
    \centering
    \begin{tikzpicture}[->,>=stealth',shorten >=1pt,auto,node distance=4cm,
      semithick, transform shape]
     \node[transaction state, text=red] at (0,0)       (t_1)           {$\tr_1$};
     \node[transaction state] at (2,0)       (t_3)           {$\tr_3$};
     \node[transaction state, text=red, label={above:\textcolor{red}{$\writeVar{ }{\key}$}}] at (-.5,1.5) (t_2) {$\tr_2$};
     \path (t_1) edge[red] node {$\wro[\key]$} (t_3);
     \path (t_2) edge[bend left] node {$\CO$} (t_3);
     \path (t_2) edge[left,double] node {$\co$} (t_1);
    \end{tikzpicture}
    \parbox{\textwidth}{
     $\forall \key,\ \forall \tr_1, \tr_2,\ \forall \tr_3.\ \tr_1\neq \tr_2\ \land$
     
     \hspace{4mm}$\tup{\tr_1,\tr_3}\in \wro[\key] \land \writeVar{\tr_2}{\key}\ \land$ 
     
     \hspace{9mm}$\tup{\tr_2,\tr_3}\in\co$
     
     \hspace{14mm}$\implies \tup{\tr_2,\tr_1}\in\co$
    }
    
    \caption{$\mathsf{Serializability}$}
    \label{ser_def}
   \end{subfigure}
   \\ \hline
  \end{tabular}
  }
  \vspace{-3mm}
  \caption{Axioms defining isolations levels. The reflexive and transitive, resp., transitive, closure of a relation $rel$ is denoted by $rel^*$, resp., $rel^+$. Also, $\circ$ denotes the composition of two relations, i.e., $rel_1 \circ rel_2 = \{\tup{a, b} | \exists c. \tup{a, c} \in rel_1 \land \tup{c, b} \in rel_2\}$.}
  \label{consistency_defs}
  \vspace{-2mm}
 \end{figure*}


A history is said to satisfy a certain isolation level if there exists a strict total order $\co$ on its transaction logs, called \emph{commit order}, which extends the write-read relation and the session order, and which satisfies certain properties. These properties, called \emph{axioms}, relate the commit order with the session-order and the write-read relation in the history. 
They are defined as 
first-order formulas\footnote{These formulas are interpreted on tuples $\tup{\hist,\co}$ of a history $\hist$ and a commit order $\co$ on the transactions in $\hist$ as usual.} of the following form:
\begin{align}
  & \forall \key,\ \forall \tr_1\neq \tr_2,\ \forall \alpha.\ \nonumber\\
  & \hspace{3mm}  \tup{\tr_1,\alpha}\in \wro[\key] \land \writeVar{\tr_2}{\key} \land \phi(\tr_2,\alpha) \implies \tup{\tr_2,\tr_1}\in\co \label{eq:axiom}
\end{align}
where $\phi$ is a property relating $\tr_2$ and $\alpha$ (i.e., the read or the transaction reading from $\tr_1$) that varies from one axiom to another. Intuitively, this axiom schema states the following: in order for $\alpha$ to read specifically $t_1$'s write on $x$, it must be the case that every $t_2$ that also writes $x$ and satisfies $\phi(t_2,\alpha)$ was committed before $t_1$. 
The property $\phi$ relates $\tr_2$ and $\alpha$ using the relations in a history and the commit order. 
Figure~\ref{consistency_defs} shows the axioms defining three isolation levels: Read Committed, Causal Consistency, and Serializability (see~\cite{DBLP:journals/pacmpl/BiswasE19} for axioms defining Read Atomic, Prefix, and Snapshot Isolation).

For instance, $\mathsf{Read\ Committed}$~\cite{DBLP:conf/sigmod/BerensonBGMOO95} requires that every read returns a value written in a committed transaction, and also, that the reads in the same transaction are ``monotonic'', i.e., they do not return values that are older, w.r.t. the commit order, than values read in the past.
While the first condition holds for every history (because of the surjectivity of $\wro$), the second condition is expressed by the axiom $\mathsf{Read\ Committed}$ in Figure~\ref{lock_rc_def}, which states that for any transaction $\tr_1$ writing a key $\key$ that is read in a transaction $\tr$, the set of transactions $\tr_2$ writing $\key$ and read previously in the same transaction (these reads may concern other keys) must precede $\tr_1$ in commit order. 
For instance, Figure~\ref{rc_example:1} shows a history and a (partial) commit order that does not satisfy this axiom because $\rd{\key_1}{1}$ returns the value written in a transaction ``older'' than the transaction read in the previous $\rd{\key_2}{2}$. 

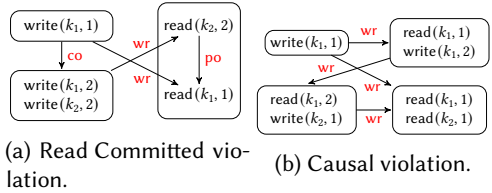
\begin{figure}
  
   \centering
   \begin{subfigure}{.23\textwidth}
  \resizebox{\textwidth}{!}{
\begin{tikzpicture}[->,>=stealth',shorten >=1pt,auto,node distance=3cm,
    semithick, transform shape]
    \node[draw, rounded corners=2mm] (t1) at (0, 0) {\begin{tabular}{l} $\wrt{\key_1}{1}$ \end{tabular}};
   \node[draw, rounded corners=2mm,outer sep=0] (t2) at (0, -1.5) {\begin{tabular}{l} $\wrt{\key_1}{2}$ \\ $\wrt{\key_2}{2}$\end{tabular}};
   \node[draw, rounded corners=2mm, minimum width=1.8cm, minimum height=2.5cm] (t3) at (3, -0.75) {};
   \node[style={inner sep=0,outer sep=0}] (t3_1) at (3, 0) {\begin{tabular}{l} $\rd{\key_2}{2}$ \end{tabular}};
   \node[style={inner sep=0,outer sep=0}] (t3_2) at (3, -1.5) {\begin{tabular}{l} $\rd{\key_1}{1}$ \end{tabular}};
   \path (t1) edge node {$\co$} (t2);
   \path (t3_1) edge node {$\po$} (t3_2);
   \path (t1) edge[below] node[yshift=-4,xshift=4] {$\wro$} (t3_2);
   \path (t2) edge node[yshift=-2,xshift=7] {$\wro$} (t3_1);
  \end{tikzpicture}  
    }
    \caption{$\mathsf{Read\ Committed}$ violation.}
    \label{rc_example:1}
\end{subfigure}
\begin{subfigure}{.23\textwidth}
\resizebox{\textwidth}{!}{
\begin{tikzpicture}[->,>=stealth',shorten >=1pt,auto,node distance=3cm,
 semithick, transform shape]
 \node[draw, rounded corners=2mm,outer sep=0] (t1) at (0, 1.5) {$\wrt{\key_1}{1}$};
\node[draw, rounded corners=2mm,outer sep=0] (t2) at (3, 1.5) {\begin{tabular}{l} $\rd{\key_1}{1}$ \\ $\wrt{\key_1}{2}$ \end{tabular}};
\node[draw, rounded corners=2mm,outer sep=0] (t3) at (3, 0) {\begin{tabular}{l} $\rd{\key_1}{1}$ \\ $\rd{\key_2}{1}$ \end{tabular}};
\node[draw, rounded corners=2mm,outer sep=0] (t4) at (0, 0) {\begin{tabular}{l} $\rd{\key_1}{2}$ \\ $\wrt{\key_2}{1}$\end{tabular}};

\path (t1) edge[above] node[yshift=0,xshift=0] {$\wro$} (t2);

\path (t1) edge[below] node[yshift=-5,xshift=7] {$\wro$} (t3);

\path (t2) edge[above] node[yshift=-6,xshift=-14] {$\wro$} (0,0.58);

\path (t4) edge[below] node[yshift=0,xshift=0] {$\wro$} (t3);
\end{tikzpicture}  
}
 \caption{$\mathsf{Causal}$ violation.}
 \label{cc_example:1}
\end{subfigure}
\vspace{-3mm}
  \caption{Histories used to explain the axioms in Figure~\ref{consistency_defs}.}
  \label{counter_example:1}
\vspace{-3mm}
\end{figure}

The axiom defining $\mathsf{Causal}$ Consistency~\cite{DBLP:journals/cacm/Lamport78} states that for any transaction $\tr_1$ writing a key $\key$ that is read in a transaction $\tr_3$, the set of $(\wro\cup \so)^+$ predecessors of $\tr_3$ writing $\key$ must precede $\tr_1$ in commit order ($(\wro\cup \so)^+$ is usually called the \emph{causal} order). A violation of this axiom can be found in Figure~\ref{cc_example:1}: the transaction $\tr_2$ writing 2 to $\key_1$ is a $(\wro\cup \so)^+$ predecessor of the transaction $\tr_3$ reading 1 from $\key_1$ because the transaction $\tr_4$, writing 1 to $\key_2$, reads $\key_1$ from $\tr_2$ and $\tr_3$ reads $\key_2$ from $\tr_4$. This implies that $\tr_2$ should precede in commit order the transaction $\tr_1$ writing 1 to $\key_1$, which again, is inconsistent with the write-read relation ($\tr_2$ reads from $\tr_1$).

Finally, $\mathsf{Serializability}$~\cite{DBLP:journals/jacm/Papadimitriou79b} requires that for any transaction $\tr_1$ writing to a key $\key$ that is read in a transaction $\tr_3$, the set of $\co$ predecessors of $\tr_3$ writing $\key$ must precede $\tr_1$ in commit order. This ensures that each transaction observes the effects of all the $\co$ predecessors. 


\begin{definition}
For an isolation level $I$ defined by a set\footnote{Isolation levels like Snapshot Isolation require more than one axiom.} of axioms $X$, a history $\hist=\tup{T, \so, \wro}$ \emph{satisfies} $I$ iff there is a strict total order $\co$ s.t. $\wro\cup\so\subseteq \co$ and $\tup{h,\co}$ satisfies $X$.
 \label{axiom-criterion}
\end{definition}

\section{Operational Semantics for $\KVProgs$}
\label{sec:op-kv}

We define a small-step operational semantics for Key-Value store programs, which is parametrized by an isolation level $I$. Transactions are executed \emph{serially} one after another, and the values returned by $\rdo$ operations are decided using the axiomatic definition of $I$. The semantics maintains a history of previously executed operations, and the value returned by a $\rdo$ is chosen non-deterministically as long as extending the current history with the corresponding write-read dependency satisfies the axioms of $I$. 
We show that this semantics is sound and complete for any natural isolation level $I$, i.e., it generates precisely the same set of histories as a \emph{baseline} semantics where transactions can interleave arbitrarily and the $\rdo$ operations can return arbitrary values as long as they can be proved to be correct at the end of the execution.

\subsection{Definition of the Operational Semantics}

\tikzset{
  keep name/.style={
    prefix after command={
      \pgfextra{\let\fixname\tikzlastnode}
    }
  },
  partialbox/.style={
    keep name,
    append after command={
  ([xshift=#1]\fixname.north west) -- 
  (\fixname.north west) -- 
  (\fixname.south west) -- 
  ([xshift=#1]\fixname.south west)
  ([xshift=-#1]\fixname.north east) -- 
  (\fixname.north east) -- 
  (\fixname.south east) -- 
  ([xshift=-#1]\fixname.south east)
    }
  },
  partialbox/.default=15pt
}

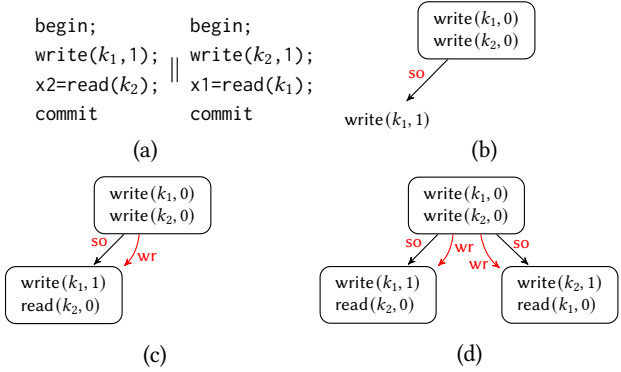
\begin{figure}
\begin{minipage}{2.2cm}
\begin{lstlisting}[xleftmargin=5mm,basicstyle=\ttfamily\footnotesize,escapeinside={(*}{*)}]
begin;
write((*$\key_1$*),1);
x2=read((*$\key_2$*));
commit
\end{lstlisting}
\end{minipage}
\begin{minipage}{1mm}
||
\end{minipage}
\hspace{-5mm}
\begin{minipage}{2.2cm}
\begin{lstlisting}[xleftmargin=5mm,basicstyle=\ttfamily\footnotesize,escapeinside={(*}{*)}]
begin;
write((*$\key_2$*),1);
x1=read((*$\key_1$*));
commit
\end{lstlisting}
\end{minipage}
\begin{minipage}{4.1cm}
  \resizebox{.7\textwidth}{!}{
   \begin{tikzpicture}[->,>=stealth',shorten >=1pt,auto,node distance=4cm,
     semithick, transform shape]
    \node[draw, rounded corners=2mm] (t0) at (1.7, 1.7) {\begin{tabular}{l} $\wrt{\key_1}{0}$ \\ $\wrt{\key_2}{0}$ \end{tabular}};
    \node(s11) at (0, 0) {\begin{tabular}{l} $\wrt{\key_1}{1}$ \end{tabular}};
    \path (t0) edge[above] node[pos=0.6, xshift=-3] {$\so$} (s11);
   \end{tikzpicture}  
  }
\end{minipage}

{\small (a)} \hspace{4cm} {\small (b)}

\medskip
\begin{minipage}{4.1cm}
  \resizebox{.7\textwidth}{!}{
   \begin{tikzpicture}[->,>=stealth',shorten >=1pt,auto,node distance=4cm,
     semithick, transform shape]
    \node[draw, rounded corners=2mm] (t0) at (1.7, 1.7) {\begin{tabular}{l} $\wrt{\key_1}{0}$ \\ $\wrt{\key_2}{0}$ \end{tabular}};
    \node[draw, rounded corners=2mm] (s11) at (0, 0) {\begin{tabular}{l} $\wrt{\key_1}{1}$ \\ $\rd{\key_2}{0}$ \end{tabular}};
    \path (t0) edge[above] node[pos=0.6, xshift=-3] {$\so$} (s11);
    \path (t0) edge[red, right, bend left=20] node[pos=0.7] {$\wro$} (s11);
   \end{tikzpicture}  
  }
  
  \begin{center}
  {\small (c)}
  \end{center}
\end{minipage}
\begin{minipage}{4.1cm}
  \resizebox{\textwidth}{!}{
   \begin{tikzpicture}[->,>=stealth',shorten >=1pt,auto,node distance=4cm,
     semithick, transform shape]
    \node[draw, rounded corners=2mm] (t0) at (1.7, 1.7) {\begin{tabular}{l} $\wrt{\key_1}{0}$ \\ $\wrt{\key_2}{0}$ \end{tabular}};
    \node[draw, rounded corners=2mm] (s11) at (0, 0) {\begin{tabular}{l} $\wrt{\key_1}{1}$ \\ $\rd{\key_2}{0}$ \end{tabular}};
    \node[draw, rounded corners=2mm] (s12) at (3.5, 0) {\begin{tabular}{l} $\wrt{\key_2}{1}$ \\ $\rd{\key_1}{0}$ \end{tabular}};
    \path (t0) edge[above] node[pos=0.6, xshift=-3] {$\so$} (s11);
    \path (t0) edge[right] node[pos=0.3] {$\so$} (s12);
    \path (t0) edge[red, right, bend left=20] node[pos=0.4,xshift=-1] {$\wro$} (s11);
    \path (t0) edge[red, left, bend right=20] node[pos=0.9,xshift=-1] {$\wro$} (s12);
   \end{tikzpicture}  
  }
  
    \begin{center}
  {\small (d)}
  \end{center}
\end{minipage}
\vspace{-3mm}
\caption{The $\mathsf{Causal}$ semantics on the program in (a), assuming that the transaction on the left is scheduled first. 
}
\label{fig:opsEx}
\vspace{-5mm}
\end{figure}

%

We use the program in Figure~\ref{fig:opsEx}a to give an overview of our semantics, assuming Causal Consistency. This program has two concurrent transactions whose reads can both return the initial value $0$, which is not possible under $\mathsf{Serializability}$. 

Our semantics executes transactions in their entirety one after another (without interleaving them), maintaining a history that contains all the executed operations. We assume that the transaction on the left executes first. Initially, the history contains a fictitious transaction log that writes the initial value 0 to all keys, and that will precede all the transaction logs created during the execution in session order. 

Executing a write instruction consists in simply appending the corresponding write operation to the log of the current transaction. For instance, executing the first write (and $\ibegin$) in our example results in adding a transaction log that contains a write operation (see Figure~\ref{fig:opsEx}b). The execution continues with the read instruction from the same transaction, and it cannot switch to the other transaction.

The execution of a read instruction consists in choosing non-deterministically a write-read dependency that validates $\mathsf{Causal}$ when added to the current history. In our example, executing $\iread(\key_2)$ results in adding a write-read dependency from the transaction log writing initial values, which determines the return value of the $\iread$ (see Figure~\ref{fig:opsEx}c). This choice makes the obtained history satisfy $\mathsf{Causal}$. 

The second transaction executes in a similar manner. When executing its read instruction, the chosen write-read dependency is again related to the transaction log writing initial values (see Figure~\ref{fig:opsEx}d). This choice is valid under $\mathsf{Causal}$. Since a read must not read from the preceding transaction, this semantics is able to simulate all the ``anomalies'' of a weak isolation level (this execution being an example).

Formally, the operational semantics is defined as a transition relation $\Rightarrow_I$ between \emph{configurations}, which are defined as tuples containing the following:
\begin{itemize}
	\item history $\hist$ storing the operations executed in the past, 
	\item identifier $j$ of the current session,
	\item local variable valuation $\gamma$ for the current transaction, 
	\item code $\mathsf{B}$ that remains to be executed from the current transaction, and
	\item sessions/transactions $\mathsf{P}$ that remain to be executed from the original program.
\end{itemize}

For readability, we define a program as a partial function $\mathsf{P}:\mathsf{SessId}\rightharpoonup \mathsf{Sess}$ that associates session identifiers in $\mathsf{SessId}$ with concrete code as defined in Figure~\ref{fig:syntax} (i.e., sequences of transactions). Similarly, the session order $\so$ in a history is defined as a partial function $\so:\mathsf{SessId}\rightharpoonup \mathsf{Tlogs}^*$ that associates session identifiers with sequences of transaction logs. Two transaction logs are ordered by $\so$ if one occurs before the other in some sequence $\so(j)$ with 
$j\in \mathsf{SessId}$.

Before presenting the definition of $\Rightarrow_I$, we introduce some notation. Let $\hist$ be a history that contains a representation of $\so$ as above. We use $\hist\oplus_j \tup{\tr,O,\po}$ to denote a history where $\tup{\tr,O,\po}$ is appended to $\so(j)$. 
Also, for an operation $o$, $\hist\oplus_j o$ is the history obtained from $\hist$ by adding $o$ to the last transaction log in $\so(j)$ and as a last operation in the program order of this log (i.e.,  if $\so(j)=\sigma; \tup{t,O,\po}$, then the session order $\so'$ of $\hist\oplus_j o$ is defined by $\so'(k)=\so(k)$ for all $k\neq j$ and $\so(j) =\sigma; \tup{t,O\cup{o},\po\cup \{(o',o): o'\in O\}}$). Finally, for a history $\hist = \tup{T, \so, \wro}$, $\hist\oplus\wro(\tr,o)$ is the history obtained from $\hist$ by adding $(\tr,o)$ to the write-read relation.

\begin{figure} [t]
\small
  \centering
  \begin{mathpar}
    \inferrule[spawn]{\tr \mbox{ fresh}\quad \mathsf{P}(j) = \ibegin; \mathsf{Body}; \icommit; \mathsf{S}}{
      \hist,\_,\_,\epsilon,\mathsf{P}
      \Rightarrow_I
      \hist \oplus_j \tup{\tr,\emptyset,\emptyset},j,\emptyset,\mathsf{Body},\mathsf{P}[j\mapsto \mathsf{S}]
    } 

    \inferrule[if-true]{\varphi(\vec{x})[x\mapsto \gamma(x): x\in\vec{x}]\mbox{ true}}{
      \hist,j,\gamma,\iif{\phi(\vec{x})}{\mathsf{Instr}};\mathsf{B}, \mathsf{P}
      \Rightarrow_I
      \hist,j,\gamma,\mathsf{Instr};\mathsf{B},\mathsf{P}
    } 

    \inferrule[if-false]{\varphi(\vec{x})[x\mapsto \gamma(x): x\in\vec{x}]\mbox{ false}}{
      \hist,j,\gamma,\iif{\phi(\vec{x})}{\mathsf{Instr}};\mathsf{B}, \mathsf{P}
      \Rightarrow_I
      \hist,j,\gamma,\mathsf{B},\mathsf{P}
    } 

    \inferrule[write]{v = \gamma(x)\quad \id\mbox{ fresh}}{
      \hist,j,\gamma, \iwrite(\key,\xvar);\mathsf{B}, \mathsf{P}
      \Rightarrow_I
      \hist \oplus_j \wrt[\id]{\key}{\val},j,\gamma,\mathsf{B},\mathsf{P}
    } 

    \inferrule[read-local]{
    \wrt{\key}{\val}\mbox{ is the last write on $\key$ in $\tr$ w.r.t. $\po$}\\
    \id\mbox{ fresh }
    }{
      \hist,j,\gamma, \xvar := \iread(\key);\mathsf{B}, \mathsf{P}
      \Rightarrow_I
      \hist \oplus_j \rd[\id]{\key}{\val},j,\gamma[\xvar\mapsto \val],\mathsf{B},\mathsf{P}
    } 

    \inferrule[read-extern]{
    \hist=(T,\so,\wro) \\
    \tr \mbox{ is the id of the last transaction log in $\so(j)$} \\
    \wrt{\key}{\val}\in\writeOp{\tr'}\mbox{ with $\tr'\in T$ and $\tr'\neq \tr$} \\
    \id\mbox{ fresh }\\
    \hist' = (\hist \oplus_j \rd[\id]{\key}{\val}) \oplus \wro(\tr',\rd[i]{\key}{\val}) \\
    \hist' \mbox{ satisfies } I }{
      \hist,j,\gamma, \xvar := \iread(\key);\mathsf{B}, \mathsf{P}
      \Rightarrow_I 
      \hist',j,\gamma[\xvar\mapsto \val], \mathsf{B}, \mathsf{P}
    } 
    
  \end{mathpar}
 \vspace{-4mm}
  \caption{Operational semantics for $\KVProgs$ programs under isolation level $I$. For a function $f:A\rightharpoonup B$, $f[a\mapsto b]$ denotes the function $f':A\rightharpoonup B$ defined by $f'(c) = f(c)$, for every $c\neq a$ in the domain of $f$, and $f'(a)=b$.}
  \label{fig:op:sem}
 \vspace{-4mm}
\end{figure}

Figure~\ref{fig:op:sem} lists the rules defining $\Rightarrow_I$. The \textsc{spawn} rule starts a new transaction, provided that there is no other live transaction ($\mathsf{B}=\epsilon$). It adds an empty transaction log to the history and schedules the body of the transaction. \textsc{if-true} and \textsc{if-false} check the truth value of a Boolean condition of an $\mathtt{if}$ conditional. \textsc{write} corresponds to a write instruction and consists in simply adding a write operation to the current history. \textsc{read-local} and \textsc{read-extern} concern read instructions. \textsc{read-local} handles the case where the read follows a write on the same key $k$ in the same transaction: the read returns the value written by the last write on $k$ in the current transaction. Otherwise, \textsc{read-extern} corresponds to reading a value written in another transaction $\tr'$ ($\tr$ is the id of the log of the current transaction). The transaction $\tr'$ is chosen non-deterministically as long as extending the current history with the write-read dependency associated to this choice leads to a history that still satisfies $I$.

An \emph{initial} configuration for program $\prog$ contains the program $\prog$ along with a history $\hist=\tup{\{\tr_0\},\emptyset,\emptyset}$, where $\tr_0$ is a transaction log containing only writes that write the initial values of all keys, and empty current transaction code ($\mathsf{B}=\epsilon$). 
An execution of a program $\prog$ under an isolation level $I$ is a sequence of configurations $c_0 c_1\ldots c_n$ where $c_0$ is an initial configuration for $\prog$, and $c_m\Rightarrow_I c_{m+1}$, for every $0\leq m < n$. We say that $c_n$ is \emph{$I$-reachable} from $c_0$.
The history of such an execution is the history $\hist$ in the last configuration $c_n$. 
A configuration is called \emph{final} if it contains the empty program ($\prog=\emptyset$).
Let $\histOf[I]{\prog}$ denote the set of all histories of an execution of $\prog$ under $I$ that ends in a final configuration.

%
%


\subsection{Correctness of the Operational Semantics}

\begin{figure} [t]
\small
  \centering
  \begin{mathpar}
    \inferrule[spawn*]{\tr \mbox{ fresh}\quad \mathsf{P}(j) = \ibegin; \mathsf{Body}; \icommit; \mathsf{S} \quad \vec{\mathsf{B}}(j) = \epsilon}{
      \hist,\vec{\gamma},\vec{\mathsf{B}},\mathsf{P}
      \Rightarrow
      \hist \oplus_j \tup{\tr,\emptyset,\emptyset},\vec{\gamma}[j\mapsto \emptyset],\vec{\mathsf{B}}[j\mapsto \mathsf{Body}],\mathsf{P}[j\mapsto \mathsf{S}]
    } 

%
%
%

    \inferrule[read-extern*]{
    \vec{\mathsf{B}}(j) = \xvar := \iread(\key);\mathsf{B} \\
    \hist=(T,\so,\wro) \\
    \tr \mbox{ is the id of the last transaction log in $\so(j)$} \\
    \wrt{\key}{\val}\in\writeOp{\tr'}\mbox{ with $\tr'\in \transC{\hist,\vec{\mathsf{B}}}$ and $\tr\neq \tr'$} \\
    \id\mbox{ fresh }\\
    \hist' = (\hist \oplus_j \rd[\id]{\key}{\val}) \oplus \wro(\tr',\rd[i]{\key}{\val}) }{
      \hist,\vec{\gamma},\vec{\mathsf{B}}, \mathsf{P}
      \Rightarrow
      \hist',\vec{\gamma}[(j,\xvar)\mapsto \val],\vec{\mathsf{B}}[j\mapsto \mathsf{B}],\mathsf{P}
    } 
    
  \end{mathpar}
 \vspace{-4mm}
  \caption{A baseline operational semantics for $\KVProgs$ programs. Above, $\transC{\hist,\vec{\mathsf{B}}}$ denotes the set of transaction logs in $\hist$ that excludes those corresponding to live transactions, i.e., transaction logs $\tr''\in T$ such that $\tr''$ is the last transaction log in some $\so(j')$ and $\vec{B}(j')\neq\epsilon$.}
  \label{fig:op:sem:baseline}
 \vspace{-4mm}
\end{figure}

We define the correctness of $\Rightarrow_I$ in relation to a \emph{baseline} semantics where transactions can interleave arbitrarily, and the values returned by $\rdo$ operations are only constrained to come from committed transactions. 
This semantics is represented by a transition relation $\Rightarrow$, which is defined by a set of rules that are analogous to $\Rightarrow_I$. 
Since it allows transactions to interleave, a configuration contains a history $\hist$, the sessions/transactions $\mathsf{P}$ that remain to be executed, and:
\begin{itemize}
	\item a valuation map $\vec{\gamma}$ that records local variable values in the current transaction of each session ($\vec{\gamma}$ associates identifiers of sessions that have live transactions with valuations of local variables),
	\item a map $\vec{B}$ that stores the code of each live transaction (associating session identifiers with code).
\end{itemize}
Figure~\ref{fig:op:sem:baseline} lists some rules defining $\Rightarrow$ (the others can be defined in a similar manner). \textsc{spawn*} starts a new transaction in a session $j$ provided that this session has no live transaction ($\vec{\mathsf{B}}(j) = \epsilon$). Compared to \textsc{spawn} in Figure~\ref{fig:op:sem}, this rule allows unfinished transactions in other sessions. \textsc{read-extern*} does not check conformance to $I$, but it allows a read to only return a value written in a completed (committed) transaction. In this work, we consider only isolation levels satisfying this constraint. Executions, initial and final configurations are defined as in the case of $\Rightarrow_I$. The history of an execution is still defined as the history in the last configuration. Let $\histOf[*]{\prog}$ denote the set of all histories of an execution of $\prog$ w.r.t. $\Rightarrow$ that ends in a final configuration.

Practical isolation levels satisfy a ``prefix-closure'' property saying that if the axioms of $I$ are satisfied by a pair $\tup{\hist_2,\co_2}$, then they are also satisfied by every \emph{prefix} of $\tup{\hist_2,\co_2}$. A prefix of $\tup{\hist_2,\co_2}$ contains a prefix of the sequence of transactions in $\hist_2$ when ordered according to $\co_2$, and the last transaction log in this prefix is possibly incomplete.
In general, this prefix-closure property holds for isolation levels $I$ that are defined by axioms as in (\ref{eq:axiom}), provided that the property $\phi(\tr_2,\alpha)$ is \emph{monotonic}, i.e., the set of models in the context of a pair $\tup{\hist_2,\co_2}$ is a \emph{superset} of the set of models in the context of a prefix $ \tup{\hist_1,\co_1}$ of $\tup{\hist_2,\co_2}$. For instance, the property $\phi$ in the axiom defining $\mathsf{Causal}$ is $(\tr_2,\alpha)\in (\wro \cup \so)^+$, which is clearly monotonic. In general, standard isolation levels are defined using a property $\alpha$ of the form $(\tr_2,\alpha)\in R$ where $R$ is an expression built from the relations $\po$, $\so$, $\wro$, and $\co$ using (reflexive and) transitive closure and composition of relations~\cite{DBLP:journals/pacmpl/BiswasE19}. Such properties are monotonic in general (they would not be if those expressions would use the negation/complement of a relation).  An axiom as in (\ref{eq:axiom}) is called \emph{monotonic} when the property $\phi$ is monotonic. 

The following theorem shows that $\histOf[I]{\prog}$ is precisely the set of histories under the baseline semantics, which satisfy $I$ (the validity of the reads is checked at the end of an execution), provided that the axioms of $I$ are monotonic.

 \begin{theorem}
For any isolation level $I$ defined by a set of monotonic axioms,
$
\histOf[I]{\prog} = \{ h \in \histOf[*]{\prog}: h\mbox{ satisfies }I\}.
$
 \end{theorem}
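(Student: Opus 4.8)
The plan is to prove the two inclusions separately: \emph{soundness} ($\subseteq$), that every history generated serially by $\Rightarrow_I$ is also generated by the baseline semantics and moreover satisfies $I$; and \emph{completeness} ($\supseteq$), that every baseline history satisfying $I$ can be regenerated by a serial $\Rightarrow_I$ run. I expect completeness to be the substantial direction, and the place where the monotonicity hypothesis is actually used.

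For soundness I would first observe that a serial $\Rightarrow_I$ derivation is a special case of a $\Rightarrow$ derivation, by mapping each $\Rightarrow_I$ step to the analogous baseline rule and checking its side conditions. The only delicate point is \textsc{read-extern}: in a serial run every transaction other than the current one is already committed, so the chosen source $\tr'$ lies in $\transC{\hist,\vec{\mathsf{B}}}$, which is exactly what \textsc{read-extern*} requires (and the baseline drops the ``satisfies $I$'' check, so it is only more permissive). This gives $\histOf[I]{\prog}\subseteq\histOf[*]{\prog}$. To obtain that such histories satisfy $I$, I would prove the invariant that \emph{every} history reachable by $\Rightarrow_I$ satisfies $I$, by induction on the derivation. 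The \textsc{read-extern} rule maintains the invariant by its own side condition; for the remaining rules I keep the witnessing commit order $\co$ and place each freshly spawned transaction $\co$-last. The key observation is that \textsc{spawn}, \textsc{write}, and \textsc{read-local} add no write-read edges and only touch the $\co$-maximal current transaction $\tr$: since every axiom forces a conclusion of the form $\tup{\tr_2,\tr_1}\in\co$ while its property $\phi$ is built only from relations contained in $\co$, a $\co$-maximal $\tr_2=\tr$ can never satisfy $\phi$ as its source argument (the relevant relations being irreflexive), so no new axiom instance is created.

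For completeness, let $h=\tup{T,\so,\wro}\in\histOf[*]{\prog}$ satisfy $I$ and fix a witnessing $\co$ with $\wro\cup\so\subseteq\co$ and $\tup{h,\co}$ satisfying the axioms. I would enumerate the logs of $h$ other than $\tr_0$ as $\tr_1,\dots,\tr_n$ in $\co$-order and build a $\Rightarrow_I$ execution that spawns and runs each $\tr_i$ to completion, replaying its operations in $\po$-order before moving on to $\tr_{i+1}$. Several things must line up. First, the schedule is a legal program run: because $\co$ extends $\so$ and $\so$ totally orders each session, the $\co$-order restricted to a session agrees with the program's session order, so each \textsc{spawn} picks the correct next transaction. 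Second, each replayed operation returns the same value as in $h$: writes replay verbatim, intra-transaction reads are resolved identically by \textsc{read-local} (well-formedness of logs), and for each external read $\alpha\in\readOp{\tr_i}$ with $\tup{\tr',\alpha}\in\wro$ I use \textsc{read-extern} with source $\tr'$; since $\wro\subseteq\co$ the transaction $\tr'$ is some $\tr_j$ with $j<i$, hence already committed and present, and it supplies the correct value. Third, an induction on operations shows that the local valuation $\gamma$ during the replay of $\tr_i$ coincides with its value in the original run (local state is reset at each $\ibegin$ and depends only on $\tr_i$'s own reads), so the same branches are taken and the replayed log equals $\tr_i$ exactly; thus the construction reproduces $h$ and ends in a final configuration.

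The crux, and the step I expect to be the main obstacle, is discharging the \textsc{read-extern} side condition ``$\hist'$ satisfies $I$'' at each external read. Here $\hist'$ is precisely the prefix of $\tup{h,\co}$ consisting of $\tr_1,\dots,\tr_{i-1}$ together with $\tr_i$ truncated at $\alpha$, equipped with $\co$ restricted to these transactions. Since $\tup{h,\co}$ satisfies the axioms and the axioms are monotonic, the prefix-closure property established earlier guarantees that this restricted pair also satisfies them, and the restricted order still contains $\wro'\cup\so'$; hence $\hist'$ satisfies $I$ by Definition~\ref{axiom-criterion}. This is exactly where monotonicity is indispensable: without it a partial replay could transiently violate an axiom even though the completed history satisfies it, and the \textsc{read-extern} transition would be blocked. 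The remaining effort is the routine but careful bookkeeping of (i) the correspondence between the original interleaved run and the serial replay, and (ii) verifying the $\co$-maximality argument of the soundness invariant for each concrete axiom shape.
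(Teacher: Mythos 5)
Your proposal matches the paper's proof in all essentials: the $\subseteq$ direction via $\Rightarrow_I$ being a restriction of $\Rightarrow$, and the $\supseteq$ direction by replaying transactions serially in the witnessing commit order $\co$, with the monotonicity/prefix-closure lemma discharging the ``$\hist'$ satisfies $I$'' side condition of \textsc{read-extern} --- which is exactly where the paper invokes Lemma~\ref{lem:prefix}. Your only deviation is in the easy direction, where you maintain an inductive invariant (placing fresh transactions $\co$-last and arguing a $\co$-maximal $\tr_2$ cannot instantiate $\phi$) whereas the paper argues from the last \textsc{read-extern} configuration onward; your variant is, if anything, more explicit about why steps after the final external read cannot violate an axiom.
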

 
The $\subseteq$ direction follows mostly from the fact that $\Rightarrow_I$ is more constrained than $\Rightarrow$. For the opposite direction, given a history $\hist$ that satisfies $I$, i.e., there exists a commit order $\co$ such that $\tup{h,\co}$ satisfies the axioms of $I$, we can show that there exists an execution under $\Rightarrow_I$ with history $\hist$, where transactions  execute serially in the order defined by $\co$. The prefix closure property is used to prove that \textsc{read-extern} transitions are enabled (these transitions get executed with a prefix of $\hist$). See the supplementary material for more details.
  
It can also be shown that $\Rightarrow_I$ is \emph{deadlock-free} for every natural isolation level (e.g., Read Committed, Causal Consistency, Snapshot Isolation, and Serializability), i.e., every read can return some value satisfying the axioms of $I$ at the time when it is executed (independently of previous choices). 

\section{Compiling SQL to Key-Value API}
\label{sec:SQL-to-KV}

We define an operational semantics for SQL programs (in $\SQLProgs$) based on a compiler that rewrites SQL queries to Key-Value $\iread$ and $\iwrite$ instructions. For presentation reasons, we use an intermediate representation where each table of a database instance is represented using a \emph{set} variable that stores values of the primary key\footnote{For simplicity, we assume that primary keys correspond to a single column in the table.} (identifying uniquely the rows in the table) and a set of key-value pairs, one for each cell in the table. In a second step, we define a rewriting of the API used to manipulate set variables into Key-Value $\iread$ and $\iwrite$ instructions.

\paragraph{Intermediate Representation}

Let $\DBschema:\Tables\rightharpoonup 2^\Columns$ be a database schema (recall that $\Tables$ and $\Columns$ are the set of table names and column names, resp.). For each table $\tab$, let $\tab.\pkey$ be the name of the primary key column. We represent an instance $\DBinst: \mathsf{dom}(\DBschema)\rightarrow 2^{\Rows}$ using:
\begin{itemize}
	\item for each table $\tab$, a set variable $\tab$ (with the same name) that contains the primary key value $r(\tab.\pkey)$ of every row $r\in \DBinst(\tab)$, 
	\item for each row $r\in \DBinst(\tab)$ with primary key value $\pkeyVal = r(\tab.\pkey)$, and each column $c\in \DBschema(\tab)$, a key $\tab.\pkeyVal.c$ associated with the value $r(c)$.
\end{itemize}

\begin{figure}[t]
{\footnotesize
\begin{minipage}[t]{3cm}
\setlength{\tabcolsep}{3pt}
\begin{center}
Table:
\end{center}

\begin{tabular}{|c|c|c|} \hline
\multicolumn{3}{|c|}{A} \\ \hline\hline
Id & Name & City \\ \hline
1 & Alice & Paris \\ \hline
2 & Bob & Bangalore \\ \hline
3 & Charles & Bucharest \\ \hline
\end{tabular}
\end{minipage}
\begin{minipage}[t]{5cm}
\begin{center}
Intermediate representation:
\end{center}
\setlength{\tabcolsep}{1pt}
\begin{tabular}{lll} 
\multicolumn{3}{l}{A = \{ 1, 2, 3 \}} \\
\\[1mm]
A.1.Id: 1, & A.1.Name: Alice, & A.1.City: Paris \\
A.2.Id: 2, & A.2.Name: Bob, & A.2.City: Bangalore \\
A.3.Id: 3, & A.3.Name: Charles, & A.3.City: Bucharest
\end{tabular}
\end{minipage}}
 \vspace{-2mm}
\caption{Representing tables with set variables and key-value pairs. We write a key-value pair as key:value.}
\label{fig:sql-example}
 \vspace{-2mm}
\end{figure}

\begin{example}

The table A on the left of Figure~\ref{fig:sql-example}, where the primary key is defined by the Id column, is represented using a set variable A storing the set of values in the column Id, and one key-value pair for each cell in the table.

\end{example}

\begin{figure}[t]
\small
\begin{flushleft}
\begin{minipage}{6cm}
\begin{flushleft}
\texttt{SELECT}/\texttt{DELETE}/\texttt{UPDATE}
\end{flushleft}
\vspace{-2mm}
\begin{lstlisting}[xleftmargin=5mm,language=MyLang,escapeinside={(*}{*)}]
rows := elements(tab)
for ( let pkeyVal of rows ) {
   for ( let c of (*$\vec{c_2}$*) ) {
      val[c] := read(tab.pkeyVal.c)
   if ( (*$\phi[\texttt{c}\mapsto \texttt{val[c]}: \texttt{c}\in\vec{c_2}]$*) true )
      // (*$\iselect{\vec{c_1}}{\xvar}{\tab}{\phi(\vec{c_2})}$*)
      for ( let c of (*$\vec{c_1}$*) )
         out[c] := read(tab.pkeyVal.c)
      x := x (*$\cup$*) out
      // (*$\idelete{\tab}{\phi(\vec{c_2})}$*)
      remove(tab, pkeyVal);
      // (*$\iupdate{\tab}{\vec{c_1}=\vec{x}}{\phi(\vec{c_2})}$*)
      for ( let c of (*$\vec{c_1}$*) )
         write( tab.pkeyVal.c, (*$\gamma$*)((*$\vec{x}$*)[c]) )
\end{lstlisting}
\end{minipage}%
\begin{minipage}{2cm}
	~
\end{minipage}%
\begin{minipage}{3cm}

\begin{flushleft}
$\iinsert{\tab}{\vec{x}}$
\end{flushleft}
\vspace{-2mm}
\begin{lstlisting}[xleftmargin=5mm,language=MyLang,escapeinside={(*}{*)}]
pkeyVal := (*$\gamma$*)((*$\vec{x}$*)[0])
if ( add(tab,pkeyVal) ) {
   for ( let c of (*$\DBschema(\texttt{tab})$*) ) {
      write( tab.pkeyVal.c, (*$\gamma$*)((*$\vec{x}$*)[c]) )
\end{lstlisting}
\end{minipage}
\end{flushleft}
    \caption{Compiling SQL queries to the intermediate representation. Above, $\gamma$ is a valuation of local variables. Also, in the case of $\mathtt{INSERT}$, we assume that the first element of $\vec{x}$ represents the value of the primary key.}
    \label{fig:sql-ir}
\end{figure}

Figure~\ref{fig:sql-ir} lists our rewriting of SQL queries over a database instance $\DBinst$ to programs that manipulate the set variables and key-value pairs described above. This rewriting contains the minimal set of accesses to the cells of a table that are needed to implement an SQL query according to its conventional specification. To manipulate set variables, we use $\iadd$ and $\iremove$ for adding and removing elements, respectively (returning $\btrue$ or $\bfalse$ when the element is already present or deleted from the set, respectively), and $\ielements$ that returns all of the elements in the input set\footnote{$\iadd(s,e)$ and $\iremove(s,e)$ add and remove the element $e$ from $s$, respectively. $\ielements(s)$ returns the content of $s$.}. 

$\mathsf{SELECT}$, $\mathsf{DELETE}$, and $\mathsf{UPDATE}$ start by reading the contents of the set variable storing primary key values and then, for every row, the columns in $\vec{c_2}$ needed to check the Boolean condition $\phi$ (the keys corresponding to these columns). For every row satisfying this Boolean condition, $\mathsf{SELECT}$ continues by reading the keys associated to the columns that need to be returned, $\mathsf{DELETE}$ removes the primary key value associated to this row from the set $\tab$, and $\mathsf{UPDATE}$ writes to the keys corresponding to the columns that need to be updated. In the case of $\mathsf{UPDATE}$, we assume that the values of the variables in $\vec{x}$ are obtained from a valuation $\gamma$ (this valuation would be maintained by the operational semantics of the underlying Key-Value store). $\mathsf{INSERT}$ adds a new primary key value to the set variable $\tab$ (the call to $\iadd$ checks whether this value is unique) and then writes to the keys representing columns of this new row.

\paragraph{Manipulating Set Variables}

\begin{figure}[t]
\small
\begin{minipage}[t]{4.2cm}
\begin{flushleft}
$\iadd(tab,pkeyVal)$:
\end{flushleft}
\vspace{-2mm}
\begin{lstlisting}[language=MyLang,escapeinside={(*}{*)}] 
if (read((*$tab$*).has.(*$pkeyVal$*))) 
   return false;
write((*$tab$*).has.(*$pkeyVal$*),true)
return true;
\end{lstlisting}
\end{minipage}
%
%
%
\begin{minipage}[t]{4cm}
\begin{flushleft}
$\ielements(tab)$:
\end{flushleft}
\vspace{-2mm}
\begin{lstlisting}[language=MyLang,escapeinside={(*}{*)}]
ret := (*$\emptyset$*)
for ( let (*$pkeyVal$*) of (*$\Vals$*) )
   if (read((*$tab$*).has.(*$pkeyVal$*))) 
      ret := ret (*$\cup$*) {(*$pkeyVal$*)}
return ret;
\end{lstlisting}
\end{minipage}
\vspace{-4mm}
    \caption{Manipulating set variables using key-value pairs.}
    \label{fig:ir-key}
\vspace{-3mm}
\end{figure}

Based on the standard representation of a set using its characteristic function, we implement each set variable $\tab$ using a set of keys $\tab.\icontains.\pkeyVal$, one for each value $\pkeyVal\in\Vals$. These keys are associated with Boolean values, indicating whether $\pkeyVal$ is contained in $\tab$. In a concrete implementation, this set of keys need not be fixed a-priori, but can grow during the execution with every new instance of an $\mathtt{INSERT}$. Figure~\ref{fig:ir-key} lists the implementations of $\iadd$/$\ielements$, which are self-explanatory ($\iremove$ is analogous).

%
%
%


\section{Implementation}
\label{sec:impl}

We implemented \tool{}\footnote{We plan to make MonkeyDB available open-source
soon.} to support an interface common to most storage
systems. Operations can be either key-value (KV) updates (to access data as a KV map)
or SQL queries (to access data as a relational database). \tool{} supports
transactions as well; a transaction can include multiple operations. 
Figure~\ref{fig:block_dia} shows the architecture of \tool{}. 
A client can connect to \tool{} over a TCP connection, as is
standard for SQL databases\footnote{We support the MySQL client-server
protocol using \url{https://github.com/jonhoo/msql-srv}.}. 
This
offers a plug-and-play experience when using
standard frameworks such as JDBC \cite{jdbc}. 
Client applications can also use \tool{} as a library in order to directly invoke the storage APIs,
or interact with it via HTTP requests, with JSON payloads.  


MonkeyDB contains a SQL-To-KV compiler that parses an input query\footnote{We
use \url{https://github.com/ballista-compute/sqlparser-rs}}, builds
its Abstract Syntax Tree (AST) and then applies the rewriting steps described in Section~\ref{sec:SQL-to-KV} 
to produce an equivalent sequence of KV API calls ({\tt read()} and {\tt write()}).
It uses a hashing routine ({\tt hash}) to generate unique keys corresponding to each cell in a table.
For instance, in order to insert a value $v$ for a column $c$ in a particular row with primary key value $\pkeyVal$, of a table $\tab$, 
we invoke {\tt write(hash($\tab$, $\pkeyVal$, $c$), $v$)}. 
We currently support only a subset of the standard SQL operators. For instance, 
nested queries or join operators are unsupported; these can be added in the
future with more engineering effort.

MonkeyDB schedules transactions from different sessions
one after the other using a single global lock.
Internally, it maintains execution state as a history consisting of a set of transaction logs, 
write-read relations and a partial session order (as discussed in \sectref{ax-kv}).
On a {\tt read()}, MonkeyDB first collects a set of possible writes present in transaction log 
that can potentially form write-read (read-from) relationships, and then 
invokes the consistency checker (Figure~\ref{fig:block_dia}) to confirm
validity under the chosen isolation level.
Finally, it randomly returns one of the values associated with valid writes.
A user can optionally instruct MonkeyDB to only select
from the set of \textit{latest} valid write per session. This option helps limit weak behaviors
for certain reads.

The implementation of our consistency checker is based on prior work
\cite{DBLP:journals/pacmpl/BiswasE19}. It maintains 
the write-read relation as a graph, and detects cycles (isolation-level
violations) using DFS traversals on the graph. The consistency checker is an independent 
and pluggable module: we have one for Read Committed and one for Causal
Consistency, and more can be added in the future.



\begin{figure}
\includegraphics[scale=0.8]{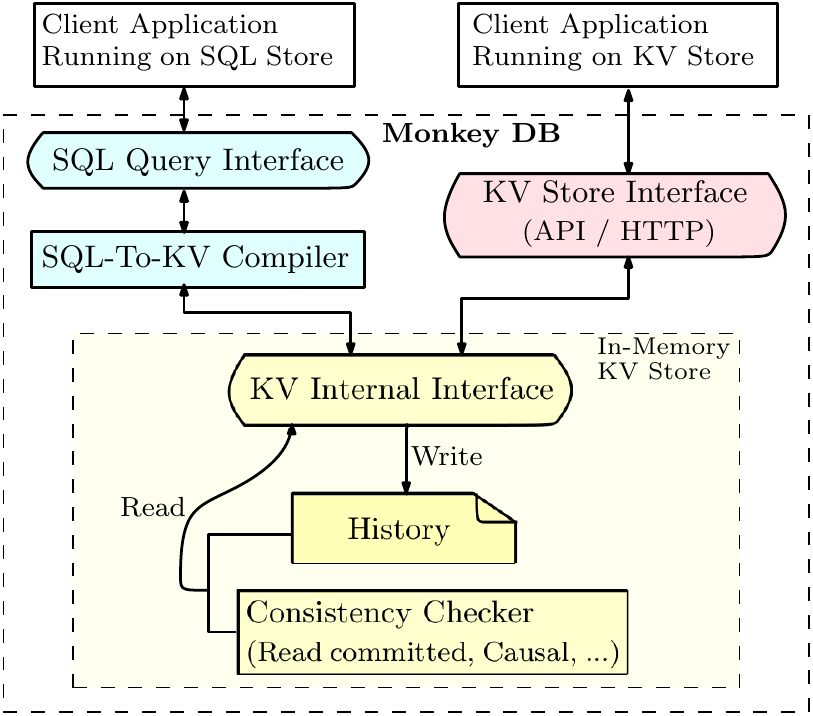}
\caption{Architecture of \tool{}}
\label{fig:block_dia}
\end{figure}



\section{Evaluation: Microbenchmarks}
\label{sec:micro}

We consider a set of micro-benchmarks inspired from real-world applications 
(\sectref{micro-benchmarks}) and evaluate the number of test iterations
required to fail an invalid assertion 
(\sectref{micro-assertion-violations}). We also measure the \textit{coverage} of
weak behaviors provided by MonkeyDB (\sectref{micro-coverage}). Each of these
applications were implemented based on their specifications described in prior
work; they all use MonkeyDB as a library, via its KV interface.  


\subsection{Applications}
\label{sec:micro-benchmarks}

\paragraph{Twitter \cite{twissandra}}
This is based on a social-networking application that allows users to create a new account, follow,
unfollow, tweet, browse the newsfeed (tweets from users you follow)
and the timeline of any particular user. 
\figref{twitter-algo} shows the pseudo code for two operations. 

A user can access twitter from multiple clients (sessions), which could lead to
unexpected behavior under weak isolation levels. 
Consider the following scenario with two users, $A$ and $B$ where user $A$ is accessing twitter from two
different sessions, $S_1$ and $S_2$. User $A$ views the timeline of user $B$
from one session (\texttt{$S_1$:Timeline($B$)}) and decides to follow $B$
through another session (\texttt{$S_2$:Follow($A$, $B$)}). Now when user $A$
visits their timeline or newsfeed (\texttt{$S_2$:NewsFeed($A$)}), they expect to
see all the tweets of $B$ that were visible via \texttt{Timeline} in session $S_1$. But
under weak isolation levels, this does not always hold true and there could
be missing tweets. 

\begin{figure}
  \begin{tabular}{@{\hspace{0ex}}l@{\hspace{-1ex}}l}
	\begin{minipage}{4cm}
		\begin{lstlisting}[basicstyle=\ttfamily\footnotesize,escapeinside={(*}{*)},language=MyLang]
			
// Get user's tweets
Timeline(user u) {
  Begin()
  key = "tweets:" + u.id
  T = read(key) 
  Commit()
  return sortByTime(T)
}
		\end{lstlisting}
	\end{minipage}
  &
	\begin{minipage}{4.3cm}
		\begin{lstlisting}[xleftmargin=3mm,basicstyle=\ttfamily\footnotesize,escapeinside={(*}{*)},language=MyLang]
// Get following users' tweets
NewsFeed(user u) {
  Begin()
  FW = read("following:"+ u.id)
  NF = {}
  foreach v (*$\in$*) FW:
    T = read("tweets:"+ v.id)
    NF = NF (*$\cup$*) T
  Commit()
  return sortByTime(NF)
}
		\end{lstlisting}
	\end{minipage}
\end{tabular}
\vspace{-5ex}
	\caption{Example operations of the Twitter app}
	\label{fig:twitter-algo}
\end{figure}

\vspace{-2mm}
\paragraph{Shopping Cart \cite{sivaramakrishnan2015declarative}}
This application allows a user to add,
remove and change quantity of items from different sessions. It also allows the
user to view all items present in the shopping cart. The pseudo code and
an unexpected behavior under weak isolation levels were discussed in
\sectref{intro}, \figref{motiv}.

\vspace{-2mm}	
\paragraph{Courseware \cite{DBLP:conf/esop/NairP020}}
This is an application for managing students and courses, allowing students
to register, de-register and enroll for courses. Courses can also be created 
or deleted. Courseware maintains the current status of
students (registered, de-registered), courses (active, deleted) as well as
enrollments.
Enrollment can contain only registered students and active courses, subject to the capacity
of the course.

Under weak isolation, it is possible that two different students, when trying to
enroll concurrently, will both succeed even though only one spot was left in the
course. Another example that breaks the application is when a student is trying
to register for a course that is being concurrently removed: once the course is
removed, no student should be seen as enrolled in that course.


\vspace{-2mm}
\paragraph{Treiber Stack \cite{cavNagarMJ20}}
Treiber stack is a concurrent stack data structure that uses 
compare-and-swap (CAS) instructions instead of locks for synchronization. This
algorithm was ported to operate on a kv-store in prior work \cite{cavNagarMJ20}
and we use that implementation. Essentially, the stack contents are placed in
a kv-store, instead of using an in-memory linked data structure.
Each row in the store contains a pair consisting of the stack element and the key of the next
row down in the stack. A designated key ``{\tt head}'' stores the key of the
top of the stack. CAS is implemented as a transaction, but the 
\texttt{pop} and \texttt{push} operations do not use transactions, i.e., each
read/write/CAS is its own transaction.


When two different clients try to \texttt{pop} from the stack concurrently, 
under serializability, each \texttt{pop} would return a unique value, assuming that each pushed value is
unique. However, under causal consistency, concurrent \texttt{pop}s can return the same
value.

\subsection{Assertion Checking}
\label{sec:micro-assertion-violations}

We ran the above applications with MonkeyDB to find out if assertions, capturing
unexpected behavior, were violated under causal consistency. Table
\ref{tab:assert} summarizes the results. For each application, 
we used 3 client threads and 3 operations per thread. 
We ran each test with MonkeyDB for a total of 10,000 times; we refer to a run as
an iteration. We  report the average number of iterations (Iters) 
before an assertion failed, and the corresponding time taken 
(sec). All the assertions were violated within 58 iterations, in half a second
or less. In contrast, running with an actual database almost never
produces an assertion violation.

\begin{table}[]
	\footnotesize
	\begin{tabular}{|l|l|c|c|}
		
		\hline
		
    \textbf{Application} & \textbf{Assertion}   & \multicolumn{2}{c|}{\textbf{Avg. time to fail}}     \\ 
                         &                      & \textbf{(Iters)} & \textbf{(sec)} \\ \hline
		
    Stack                & Element popped more than once  & 3.7  & 0.02 \\ \hline
		
    Courseware           & Course registration overflow & 10.6 & 0.09  \\ \hline
		
    Courseware           & Removed course registration & 57.5 & 0.52   \\ \hline
		
    Shopping            & Item reappears after deletion & 20.2 & 0.14  \\ \hline
		
    Twitter              & Missing tweets in feed & 6.3 & 0.03 \\ \hline
		
	\end{tabular}
	\caption{\label{tab:assert}Assertions checking results in microbenchmarks}
\end{table}

%
%
%
%
%
%
%
%
%
%

\subsection{Coverage}
\label{sec:micro-coverage}

The previous section only checked for a particular set of assertions. As an additional measure of
test robustness, we count the number of distinct \textit{client-observable
states} generated by a test. A client-observable state, for an execution, is the vector of values returned by
read operations. For instance, a stack's state is defined by return values of 
\texttt{pop} operations; a shopping cart's state is defined by the return value
of \texttt{GetCart} and so on. 

For this experiment, we randomly generated test harnesses; each harness spawns
multiple threads that each execute a sequence of operations. In order to compute the absolute maximum
of possible states, we had to limit the size of the tests: either 2 or 3
threads, each choosing between 2 to 4 operations. 

Note that any program that concurrently executes operations against a store has
two main sources of non-determinism: the first is the interleaving of operations
(i.e., the order in which operations are submitted to the store) and second is
the choice of read-from (i.e., the value returned by the store under its
configured isolation level). MonkeyDB only controls the latter; it is up to the
application to control the former. There are many tools that systematically enumerate
interleavings (such as \textsc{Chess} \cite{DBLP:conf/pldi/MusuvathiQ08},
\textsc{Coyote} \cite{coyote-web}), but we use a simple trick
instead to avoid imposing any burden on the application: 
we included an option in MonkeyDB to deliberately add a small random
delay (sleep between $0$ to $4$ ms) before each transaction begins. This option 
was sufficient in our experiments, as we show next.

We also implemented a special setup using the \textsc{Coyote} tool \cite{coyote-web} 
to enumerate all sources of non-determinism, interleavings as well as
read-from, in order to explore the entire state space of a
test. We use this to compute the total number of states. \figref{micro_dfs}
shows the number of distinct 
states observed under different isolation levels, averaged across multiple ($50$) test
harnesses. For each of serializability and causal consistency, we show the max
(as computed by \textsc{Coyote}) and versions with and without the delay option
in MonkeyDB. 

Each of these graphs show similar trends: the number of states with
causal consistency are much higher than with serializability. Thus, testing with a
store that is unable to generate weak behaviors will likely be ineffective.
Furthermore, the ``delay'' versions of MonkeyDB are able to approach the 
maximum within a few thousand attempts, implying that MonkeyDB's strategy of
per-read randomness is effective for providing coverage to the application.

\begin{figure*}[h]
	
	\centering
	
	\includegraphics[width=1.0\textwidth]{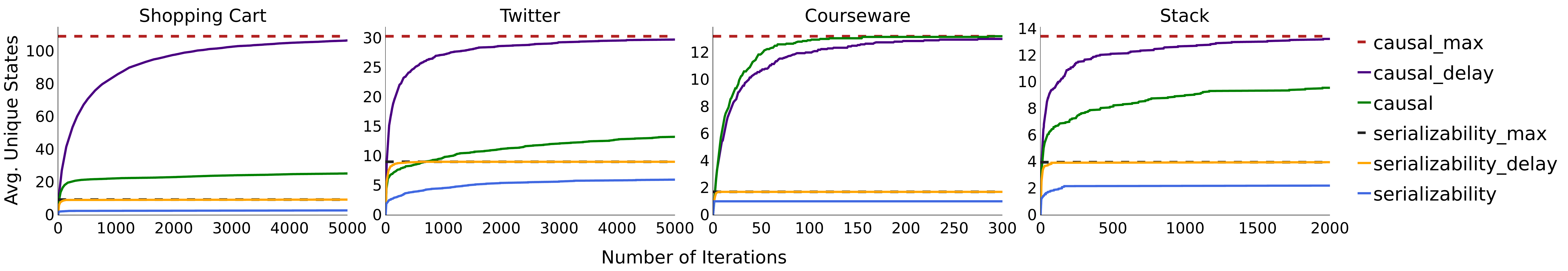}
	
	\caption{State coverage obtained with MonkeyDB for various microbenchmarks}
	
	\label{fig:micro_dfs}
\end{figure*}

\section{Evaluation: OLTP Workloads}
\label{sec:oltp}

OLTPBench \cite{difallah2013oltp} is a benchmark suite of representative 
OLTP workloads for relational databases.
We picked a subset of OLTPBench for which we had reasonable assertions. 
Table~\ref{table-bench} lists basic
information such as the number of database tables, the
number of static transactions, how many of them are read-only, and the number of
different assertions corresponding to system invariants for testing the benchmark. 
We modified OLTPBench by rewriting SQL join and aggregation
operators 
into equivalent application-level
loops, following a similar strategy as prior work \cite{clotho}. Except for
this change, we ran OLTPBench unmodified. 

For TPC-C, we obtained a set of $12$ invariants from its specification
document~\cite{tpcc-spec}. For all other benchmarks, we manually identified 
invariants that the application should satisfy. We asserted these invariants 
by issuing a read-only transaction to \tool{} 
at the end of the execution of the benchmark. 
None of the assertions fail under serializability; they are indeed invariants
under serializability.\footnote{We initially observed two assertions failing
under serializability. Upon analyzing the code, we identified that the
behavior is due to a bug in OLTPBench that we have reported to the authors (link
ommitted).} 
When using weaker isolation, we configured MonkeyDB to use latest reads only
(\sectref{impl}) for the assertion-checking transactions 
in order to isolate the weak behavior to only the application. 

We ran each benchmark $100$ times and report, for each assertion, the number of
runs in which it was violated. Note that OLTPBench runs in two phases. The first
is a loading phase that consists of a big initial transaction to populates tables 
with data, and then the execution phase issues multiple concurrent transactions. 
With the goal of testing correctness, we \textit{turn down} the scale factor to
generate a small load and limit the execution phase time to ten seconds with 
just two or three sessions. A smaller test setup has the advantage
of making debugging easier. With \tool{}, there is no need to generate large
workloads.


\begin{table}
  \footnotesize
	\begin{tabular}{|l|c|c|c|c|}
    \hline
		Benchmark & \#Tables & \#Txns &\#Read-only & \#Assertions \\ \hline
		TPC-C &  9 & 5 & 2& 12\\
		SmallBank & 3 & 6 & 1 & 1\\
		Voter & 3 & 1 & 0 & 1\\
		Wikipedia & 12 & 5 & 2 & 3\\
		\hline
	\end{tabular}	
	\caption{OLTP benchmarks tested with \tool{}}
	\label{table-bench}
\end{table}

\paragraph{TPC-C} TPC-C emulates a wholesale supplier transactional system
that delivers orders for a warehouse company.
This benchmark deals with customers, payments, orders, warehouses, 
deliveries, etc. 
We configured OLTPBench to issue a higher proportion ($>85\%$) of update
transactions, compared to read-only ones.  
Further, we considered a small input workload constituting of one warehouse, two
districts per warehouse and three customers per district.


TPC-C has twelve assertions (A1 to A12) that check for consistency between 
the database tables.
For example, A12 checks: for any customer, the sum of delivered order-line
amounts must be equal to the sum of balance amount and YTD (Year-To-Date)
payment amount of that customer.

    \begin{figure}[t]
        \centering
    	\includegraphics[scale=0.5]{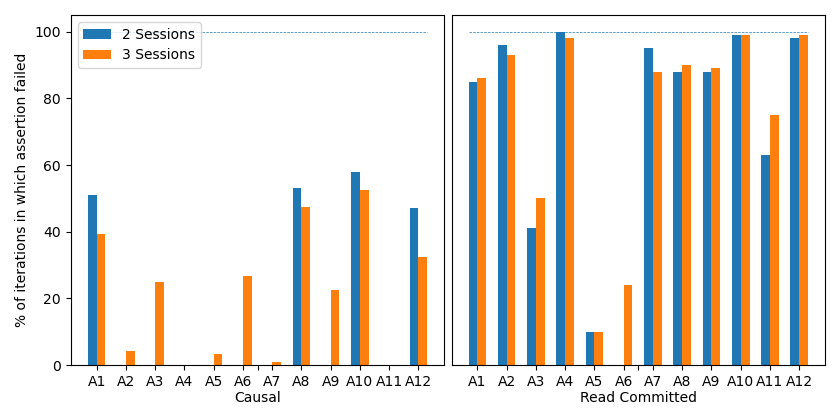}
	    \caption{\small Assertion checking: {TPC-C}}
        \label{fig:tpcc}
    \end{figure}
    \begin{figure}[t]
        \centering
    \includegraphics[scale=0.45]{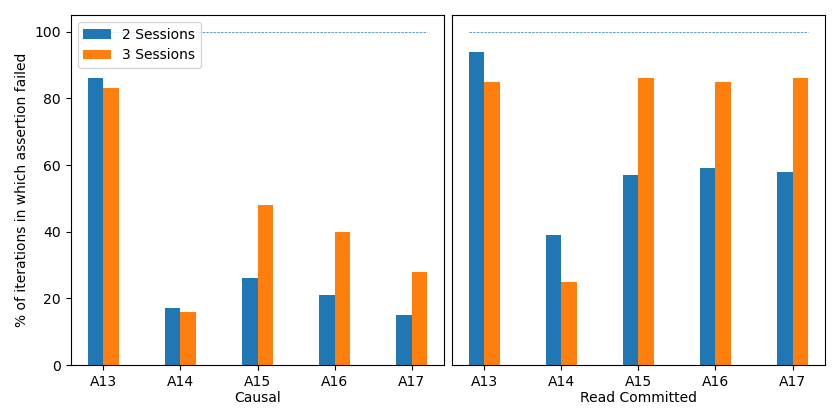}
	    \caption{\small Assertion checking: SmallBank, Voter, and Wikipedia}
    \label{fig:rest}
    \end{figure}

Figure~\ref{fig:tpcc} 
shows the percentage of test runs in which an assertion failed. 
It shows that all the twelve assertions are violated under
Read~Committed isolation level. In fact, $9$ out of the $12$ 
assertions are violated in more than 60\% of the test runs.
In case of causal, 
all assertions are violated
with three sessions, except for A4 and A11. We manually inspected TPC-C and
we believe that both these assertions are valid under causal consistency. 
For instance, A4 checks for consistency between two tables, both of which are
only updated within the same transaction, thus causal consistency is enough to
preserve consistency between them.

These results demonstrate the effectiveness of \tool{} in breaking (invalid) 
assertions. Running with MySQL, under read committed, was unable to violate any assertion except for
two (A10 and A12), even when increasing the number of sessions to $10$. We used
the same time limit of $10$ seconds for the execution phase. We note that MySQL
is much faster than MonkeyDB and ends up processing up to $50\times$ more
transactions in the same time limit, yet is unable to violate most assertions. 
Prior work \cite{clotho} attempted a more sophisticated test
setup where TPC-C was executed on a Cassandra cluster, while running 
Jepsen~\cite{jepsen} for fault injection. This setup also was unable to violate 
all assertions, even when running without transactions, and on a weaker isolation level than read committed. 
Only six assertions were violated with 10 sessions, 
eight assertions with 50 sessions, and ten assertions with 100 sessions.
With \tool{}, there is no need to set up a cluster, use fault injection or
generate large workloads that can make debugging very difficult.

%

\paragraph{SmallBank, Voter, and Wikipedia} 
SmallBank is a standard financial banking system, dealing with customers, saving
and checking accounts, money transfers, etc.  
Voter emulates the voting system of a television show and allows users to vote for their favorite contestants.
Wikipedia is based on the popular online encyclopedia. It deals with a complex database schema involving 
page revisions, page views, user accounts, logging, etc.  
It allows users to edit its pages and maintains a history of page edits and user actions. 

We identified a set of five assertions, A13 to A17, that should be satisfied by these systems.  
For SmallBank, we check if the total money in the bank remains the same while it
is transfered from one account to another (A13). 
Voter requires that the number of votes by a user is limited to a fixed threshold (A14).  
For Wikipedia, we check if for a given user and for a given page, the number of
edits recorded in the user information, history, and logging tables
are consistent (A15-A17). 
As before, we consider small work loads: (1) five customers for SmallBank, (2) one user for Voter, and (3) two pages and two users for Wikipedia.  

Figure~\ref{fig:rest} shows the results. 
\tool{} detected that all the assertions are invalid under the chosen isolation levels.
Under causal, \tool{} could break an assertion in 26.7\% (geo-mean) runs given 2 sessions and in 37.2\% (geo-mean) runs given 3 sessions.
Under read committed, the corresponding numbers are 56.1\% and 65.4\% for 2 and
3 sessions, respectively.


\section{Related Work}
\label{sec:related}


There have been several directions of work addressing the correctness of database-backed applications. 
We directly build upon one line of work concerned with the logical formalization
of isolation levels 
\cite{ansi,DBLP:conf/icde/AdyaLO00,DBLP:conf/sigmod/BerensonBGMOO95,DBLP:conf/concur/Cerone0G15,DBLP:journals/pacmpl/BiswasE19}.
Our work relies on the axiomatic definitions of isolation levels, as given
in~\cite{DBLP:journals/pacmpl/BiswasE19}, which also investigated
the problem of checking whether a given history satisfies a certain isolation
level. Our kv-store implementation relies on these algorithms 
to check the validity of the values returned by read operations. Working with a
logical formalization allowed us to avoid implementing an actual database with replication or
sophisticated synchronization.

Another line of work concentrates on the problem of finding ``anomalies'': 
behaviors that are not possible under serializability. This is typically done
via a static analysis of the application code that builds a static dependency graph that
over-approximates the data dependencies in all possible
executions of the application~\cite{DBLP:journals/jacm/CeroneG18,DBLP:journals/jacm/CeroneG18,DBLP:conf/concur/0002G16,DBLP:journals/tods/FeketeLOOS05,DBLP:conf/vldb/JorwekarFRS07,acidrain,isodiff}.
Anomalies with respect to a given isolation level then corresponds to a
particular class of cycles in this graph. Static dependency graphs turn out to
be highly imprecise in representing feasible executions, leading to false
positives. Another source of false positives is that an anomaly might not be a
bug because the application may already be designed to handle the
non-serializable behavior \cite{DBLP:conf/pldi/BrutschyD0V18,isodiff}. 
Recent work has tried to address these issues by using more precise 
logical encodings of the application,
e.g.~\cite{DBLP:conf/popl/BrutschyD0V17,DBLP:conf/pldi/BrutschyD0V18} or
by using user-guided heuristics~\cite{isodiff}. 

Another approach consists of modeling the application
logic and the isolation level in first-order logic and relying on SMT solvers to
search for anomalies~\cite{DBLP:journals/pacmpl/KakiESJ18,DBLP:conf/concur/NagarJ18,burcu-netys},
or defining specialized reductions to assertion
checking~\cite{DBLP:conf/concur/BeillahiBE19,DBLP:conf/cav/BeillahiBE19}.
The \textsc{Clotho} tool \cite{clotho}, for instance, uses a static analysis of the application to
generate test cases with plausible anomalies, which are deployed in a concrete
testing environment for generating actual executions. 

Our approach, based on testing with MonkeyDB, has several practical advantages.
There is no need for analyzing application code; we can work with any
application. There are no false positives because we directly run the
application and check for user-defined assertions, instead of looking for
application-agnostic anomalies. The limitation, of course, is
the inherent incompleteness of testing.

Several works have looked at the problem of reasoning about the correctness of
applications executing under weak isolation and introducing additional
synchronization when
necessary~\cite{DBLP:conf/eurosys/BalegasDFRPNS15,DBLP:conf/popl/GotsmanYFNS16,DBLP:conf/esop/NairP020,DBLP:conf/usenix/0001LCPRV14}.
As in the previous case, our work based on testing has the advantage that it can
scale to real sized applications (as opposed to these techniques which are based
on static analysis or logical proof arguments), but it cannot prove that an
application is correct. Moreover, the issue of repairing applications is
orthogonal to our work. 

From a technical perspective, our operational semantics based on recording past
operations and certain data-flow and control-flow dependencies is similar to
recent work on stateless model checking in the context of weak memory
models,
e.g.~\cite{DBLP:journals/pacmpl/Kokologiannakis18,DBLP:conf/tacas/AbdullaAAJLS15}.
This work, however, does not consider transactions. Furthermore, their focus is on
avoiding enumerating equivalent executions, which is beyond the scope of our
work (but an interesting direction for future work).

\section{Conclusion}
\label{sec:conc}

Our goal is to enable developers to test the correctness of their storage-backed applications under 
weak isolation levels. Such bugs are hard to catch because weak behaviors are
rarely generated by real storage systems, but failure to address them can lead
to loss of business \cite{acidrain}. We present MonkeyDB, an easy-to-use mock storage system
for weeding out such bugs. MonkeyDB uses a logical understanding of isolation
levels to provide (randomized) coverage of all possible weak behaviors. Our evaluation reveals that
using MonkeyDB is very effective at breaking assertions that would otherwise
hold under a strong isolation level.


\newpage

\bibliography{main,dblp}

\newpage
\appendix
\section{Correctness of the Operational Semantics}

\begin{figure} [t]
\small
  \centering
  \begin{mathpar}
    \inferrule[spawn*]{\tr \mbox{ fresh}\quad \mathsf{P}(j) = \ibegin; \mathsf{Body}; \icommit; \mathsf{S} \quad \vec{\mathsf{B}}(j) = \epsilon}{
      \hist,\vec{\gamma},\vec{\mathsf{B}},\mathsf{P}
      \Rightarrow
      \hist \oplus_j \tup{\tr,\emptyset,\emptyset},\vec{\gamma}[j\mapsto \emptyset],\vec{\mathsf{B}}[j\mapsto \mathsf{Body}],\mathsf{P}[j\mapsto \mathsf{S}]
    } 

    \inferrule[if-true]{\varphi(\vec{x})[x\mapsto \vec{\gamma}(j)(x): x\in\vec{x}]\mbox{ true} \\
    \vec{\mathsf{B}}(j) = \iif{\phi(\vec{x})}{\mathsf{Instr}};\mathsf{B}
    }{
      \hist,\vec{\gamma},\vec{\mathsf{B}}, \mathsf{P}
      \Rightarrow
      \hist,\vec{\gamma},\vec{\mathsf{B}}[j\mapsto \mathsf{Instr};\mathsf{B}],\mathsf{P}
    } 

    \inferrule[if-false]{\varphi(\vec{x})[x\mapsto \vec{\gamma}(j)(x): x\in\vec{x}]\mbox{ false} \\
    \vec{\mathsf{B}}(j) = \iif{\phi(\vec{x})}{\mathsf{Instr}};\mathsf{B}
    }{
      \hist,\vec{\gamma},\vec{\mathsf{B}}, \mathsf{P}
      \Rightarrow
      \hist,\vec{\gamma},\vec{\mathsf{B}}[j\mapsto \mathsf{B}],\mathsf{P}
    } 

    \inferrule[write]{v = \vec{\gamma}(j)(x)\quad \id\mbox{ fresh} \quad 
    \vec{\mathsf{B}}(j) = \iwrite(\key,\xvar);\mathsf{B}
    }{
      \hist,\vec{\gamma},\vec{\mathsf{B}}, \mathsf{P}
      \Rightarrow
      \hist \oplus_j \wrt[\id]{\key}{\val},\vec{\gamma},\vec{\mathsf{B}}[j\mapsto \mathsf{B}], \mathsf{P}
    } 

    \inferrule[read-local]{
    \wrt{\key}{\val}\mbox{ is the last write on $\key$ in $\tr$}\\
    \id\mbox{ fresh } \\
    \vec{\mathsf{B}}(j) = \xvar := \iread(\key);\mathsf{B}
    }{
      \hist,\vec{\gamma},\vec{\mathsf{B}}, \mathsf{P}
      \Rightarrow
      \hist \oplus_j \rd[\id]{\key}{\val},\vec{\gamma}[(j,\xvar)\mapsto \val],\vec{\mathsf{B}}[j\mapsto \mathsf{B}],\mathsf{P}
    } 

    \inferrule[read-extern*]{
    \vec{\mathsf{B}}(j) = \xvar := \iread(\key);\mathsf{B} \\
    \hist=(T,\so,\wro) \\
    \tr \mbox{ is the id of the last transaction log in $\so(j)$} \\
    \wrt{\key}{\val}\in\writeOp{\tr'}\mbox{ with $\tr'\in \transC{\hist,\vec{\mathsf{B}}}$ and $\tr\neq \tr'$} \\
    \id\mbox{ fresh }\\
    \hist' = (\hist \oplus_j \rd[\id]{\key}{\val}) \oplus \wro(\tr',\rd[i]{\key}{\val}) }{
      \hist,\vec{\gamma},\vec{\mathsf{B}}, \mathsf{P}
      \Rightarrow
      \hist',\vec{\gamma}[(j,\xvar)\mapsto \val],\vec{\mathsf{B}}[j\mapsto \mathsf{B}],\mathsf{P}
    } 
    
  \end{mathpar}
  \caption{A baseline operational semantics for $\KVProgs$ programs. Above, $\transC{\hist,\vec{\mathsf{B}}}$ denotes the set of transaction logs in $\hist$ that excludes those corresponding to live transactions, i.e., transaction logs $\tr''\in T$ such that $\tr''$ is the last transaction log in some $\so(j')$ and $\vec{B}(j')\neq\epsilon$.}
  \label{fig:op:sem:baseline:complete}
\end{figure}

This section provides more details about the proof of correctness for our operational semantics defined in Figure~\ref{fig:op:sem}. The complete definition of the baseline semantics is given in Figure~\ref{fig:op:sem:baseline:complete}.

The notion of prefix of a tuple $\{\hist_2,\co_2\}$ is formally defined as follows. For a relation $R\subseteq A\times B$, the restriction of $R$ to $A'\times B'$, denoted by $R\downarrow A'\times B'$, is defined by $\{(a,b): (a,b)\in R, a\in A', b\in B'\}$.
For $\hist_1=\tup{T_1, \so_1, \wro_1}$ and $\hist_2=\tup{T_2, \so_2, \wro_2}$, $\tup{\hist_1,\co_1}$ is a \emph{prefix} of $\tup{\hist_2,\co_2}$, denoted by $\tup{\hist_1,\co_1}\leq \tup{\hist_2,\co_2}$, iff $\tup{\hist_1,\co_1}\leq \tup{\hist_2,\co_2}$ iff $T_1=T_1'\cup\{\tup{t,O,\po}\}$, $T_2=T_2'\cup\{\tup{t,O',\po'}\}$, $T_1'\subseteq T_2'$, $O\subseteq O'$, $\po = \po' \downarrow O\times O$, $\so_1 = \so_2 \downarrow T_1\times T_1$, $\wro_1= \wro_2\downarrow T_1\times \readOp{T_1}$, and $\co_1= \co_2\downarrow T_1\times T_1$.

Then, a property $\phi(\tr_2,\alpha)$ used to define an axiom like in (\ref{eq:axiom}), is called \emph{monotonic} iff 
for every  $\tup{\hist_1,\co_1}\leq \tup{\hist_2,\co_2}$, 
\begin{align*}
\forall \tr_2, \forall \alpha. \tup{\hist_1,\co_1}\models \phi(\tr_2,\alpha) \Rightarrow  \tup{\hist_2,\co_2}\models \phi(\tr_2,\alpha).
\end{align*}

\begin{lemma}\label{lem:prefix}
For any monotonic axiom $X$, if $\tup{\hist_1,\co_1}\leq \tup{\hist_2,\co_2}$, then
\begin{align*}
\tup{\hist_2,\co_2}\mbox{ satisfies } X \Rightarrow \tup{\hist_1,\co_1}\mbox{ satisfies } X
\end{align*}
\end{lemma}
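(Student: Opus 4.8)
The plan is to establish the implication directly by verifying, under the hypothesis that $\tup{\hist_2,\co_2}$ satisfies $X$, that every instantiation of the universally quantified axiom $X$ holds in $\tup{\hist_1,\co_1}$. Since $X$ has the shape of~(\ref{eq:axiom}), it suffices to fix a key $\key$, transactions $\tr_1\neq\tr_2$, and a witness $\alpha$ (a read operation, or a transaction, depending on the axiom) satisfying the premise of $X$ in $\tup{\hist_1,\co_1}$, namely $\tup{\tr_1,\alpha}\in\wro_1[\key]$, $\writeVar{\tr_2}{\key}$ in $\hist_1$, and $\tup{\hist_1,\co_1}\models\phi(\tr_2,\alpha)$; the goal is then to derive the conclusion $\tup{\tr_2,\tr_1}\in\co_1$. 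The key idea is to transfer this premise \emph{upward} to $\tup{\hist_2,\co_2}$, apply the axiom there, and transfer the resulting conclusion back \emph{down} to $\tup{\hist_1,\co_1}$.

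For the upward transfer I would unfold the definition of $\tup{\hist_1,\co_1}\leq\tup{\hist_2,\co_2}$ and handle the three conjuncts of the premise in turn. Since $\wro_1=\wro_2\downarrow T_1\times\readOp{T_1}\subseteq\wro_2$, the write-read conjunct $\tup{\tr_1,\alpha}\in\wro_1[\key]$ immediately yields $\tup{\tr_1,\alpha}\in\wro_2[\key]$ (the same holds for the transaction-level extension of $\wro[\key]$ used in the $\mathsf{Causal}$ and $\mathsf{Serializability}$ axioms, by lifting the underlying read). The conjunct $\writeVar{\tr_2}{\key}$ is preserved because the only transaction log that differs between $\hist_1$ and $\hist_2$ is the common last log, whose operation set merely grows from $O$ to $O'\supseteq O$; a log that already contains a write to $\key$ in $O$ still contains one in $O'$. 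Finally, and this is where the hypothesis on $X$ is indispensable, the \emph{monotonicity} of $\phi$ gives $\tup{\hist_2,\co_2}\models\phi(\tr_2,\alpha)$ from $\tup{\hist_1,\co_1}\models\phi(\tr_2,\alpha)$.

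Having re-established all three conjuncts of the premise in $\tup{\hist_2,\co_2}$, and since that pair satisfies $X$, instantiating $X$ with the same $\key,\tr_1,\tr_2,\alpha$ delivers $\tup{\tr_2,\tr_1}\in\co_2$. To pull this conclusion back I would observe that $\tr_1$ and $\tr_2$ were introduced as witnesses over $\hist_1$, hence $\tr_1,\tr_2\in T_1$; combined with $\co_1=\co_2\downarrow T_1\times T_1$ this gives $\tup{\tr_2,\tr_1}\in\co_1$, which is exactly the required conclusion and closes the argument.

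I expect the proof to be structurally routine once the prefix relation is unfolded; the single indispensable ingredient is the monotonicity of $\phi$, which is precisely what allows the premise to survive the passage from the prefix $\hist_1$ to the larger $\hist_2$. The only points demanding care are bookkeeping rather than conceptual: verifying that $\wro[\key]$ and the ``writes $\key$'' predicate transfer upward under the inclusion/restriction constraints defining $\leq$ (in particular the handling of the possibly incomplete shared last log), and confirming that the witnesses lie in $T_1$ so that the conclusion restricts back along $\co_1=\co_2\downarrow T_1\times T_1$. For a genuinely multi-axiom isolation level the same argument applies to each axiom separately, since Definition~\ref{axiom-criterion} requires every axiom of $X$ to hold.
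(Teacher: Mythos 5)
Your proposal is correct and follows essentially the same argument as the paper's proof sketch: the paper observes that the instantiations of $\forall \key, \forall \tr_1, \forall \tr_2, \forall \alpha$ satisfying the left-hand side of the axiom in $\tup{\hist_1,\co_1}$ form a subset of those in $\tup{\hist_2,\co_2}$ (which is exactly your ``upward transfer'' via $\wro_1\subseteq\wro_2$, preservation of $\writeVar{\tr_2}{\key}$, and monotonicity of $\phi$), so the $\co$ constraints imposed on the prefix are among those already satisfied in the larger pair. Your write-up is merely more explicit on the bookkeeping the paper leaves implicit, in particular that the witnesses lie in $T_1$ so the conclusion restricts back along $\co_1=\co_2\downarrow T_1\times T_1$.
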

\begin{proof}(Sketch)
Given a monotonic axiom, the number of instantiations of $\forall k$, $\forall t_1$, $\forall t_2$, and $\forall \alpha$ from (\ref{eq:axiom}) that satisfy the left-hand side of the entailement in the context of $\tup{\hist_1,\co_1}$ is a subset of the same type of instantiations in the context of $\tup{\hist_2,\co_2}$. Therefore, the $\co$ constraints imposed in the context of $\tup{\hist_1,\co_1}$ (by the right-hand side of the entailement) are a subset of the $\co$ constraints imposed in the context of $\tup{\hist_2,\co_2}$. Since the latter are satisfied (because $\tup{\hist_2,\co_2}$  satisfies  $X$), the former are also satisfied and hence, $\tup{\hist_1,\co_1}$  satisfies  $X$.
\end{proof}

Lemma~\ref{lem:prefix} extends obviously to isolation levels defined as conjunctions of axioms (which is the case for all the isolation levels that we are aware of~\cite{DBLP:journals/pacmpl/BiswasE19}).

 \begin{theorem}
For any isolation level $I$ defined by a set of monotonic axioms,
\begin{align*}
\histOf[I]{\prog} = \{ h \in \histOf[*]{\prog}: h\mbox{ satisfies }I\}.
\end{align*}
 \end{theorem}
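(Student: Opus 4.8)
The plan is to prove the two set inclusions separately. The forward inclusion $\subseteq$ will follow by observing that $\Rightarrow_I$ is a (pruned) special case of the baseline $\Rightarrow$, together with an invariant that the running history stays $I$-valid; the backward inclusion $\supseteq$ will follow by replaying a given $I$-satisfying history serially in commit-order, where the side condition of \textsc{read-extern} is discharged using the prefix-closure property of Lemma~\ref{lem:prefix}.

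For $\subseteq$, I would first show $\histOf[I]{\prog}\subseteq\histOf[*]{\prog}$ by simulation. Map each $\Rightarrow_I$ configuration $\tup{\hist,j,\gamma,\mathsf{B},\prog}$ to the baseline configuration $\tup{\hist,\vec{\gamma},\vec{\mathsf{B}},\prog}$ in which $\vec{\mathsf{B}}$ and $\vec{\gamma}$ agree with $\mathsf{B},\gamma$ on session $j$ and are empty elsewhere; this is well defined since $\Rightarrow_I$ keeps at most one live transaction. Each rule of Figure~\ref{fig:op:sem} then matches the homonymous rule of Figure~\ref{fig:op:sem:baseline}: in a serial image the only live transaction is the current $\tr$, so the candidate source set $T\setminus\{\tr\}$ used by \textsc{read-extern} coincides with $\transC{\hist,\vec{\mathsf{B}}}$ used by \textsc{read-extern*}, and the sole extra requirement of $\Rightarrow_I$ is the check $\hist'$ satisfies $I$, which only removes transitions. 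Hence every $\Rightarrow_I$ execution projects onto a $\Rightarrow$ execution with the same final history. It remains to show that this history satisfies $I$, which I would do by maintaining the invariant that the current history satisfies $I$ after every step. The \textsc{read-extern} case is immediate (the rule checks it) and \textsc{read-local} adds no $\wro$ edge, so the only work is a small lemma for \textsc{spawn} and \textsc{write}: adding an empty transaction or a write to the unique live $\tr$ preserves $I$-satisfiability. The idea is that $\tr$ is a sink of $\wro\cup\so$ (nothing reads from it, it has no $\so$-successor), so a witnessing commit order can be modified to put $\tr$ last while preserving $\wro\cup\so\subseteq\co$; then every new axiom instance with $\tr_2=\tr$ has a false antecedent $\phi(\tr,\alpha)$, because a $\phi$ built from $\po,\so,\wro,\co$ without complementation requires an edge leaving $\tr$, which a $\co$-maximal sink lacks, while instances with $\tr_1=\tr$ become vacuous.

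For $\supseteq$, take $h\in\histOf[*]{\prog}$ with $h$ satisfying $I$, and fix a commit order $\co$ with $\wro\cup\so\subseteq\co$ such that $\tup{h,\co}$ satisfies the axioms. I would build a $\Rightarrow_I$ execution replaying the transaction logs of $h$ serially in $\co$-order, and inside each log replaying its operations in $\po$-order: writes via \textsc{write}; reads that are $\po$-preceded by a write to the same key via \textsc{read-local}; and the remaining reads (those in $\readOp{\tr}$) via \textsc{read-extern}, choosing as source the transaction $\tr'$ dictated by $\wro$ in $h$ and returning the value recorded there. Here $\tr'\neq\tr$ because $h$ was generated by $\Rightarrow$, whose \textsc{read-extern*} rule reads only from other, completed transactions. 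Replaying reads with the values stored in $h$ forces the local valuations to coincide with those of the baseline run that produced $h$, so every \texttt{if} guard evaluates identically and the same instructions are scheduled; thus the replay is faithful and terminates in a final configuration whose history is exactly $h$.

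The main obstacle, and the genuinely load-bearing step, is discharging the side condition that $\hist'$ satisfies $I$ at each \textsc{read-extern} step of the replay. By construction each intermediate $\hist'$ is a prefix of $\tup{h,\co}$ in the formal sense of the appendix: the already-replayed logs form a $\co$-downward-closed set, and the currently executing log contributes a $\po$-prefix of its operations, which is precisely the situation where the last log is allowed to be incomplete. Since $\tup{h,\co}$ satisfies the monotonic axioms of $I$, Lemma~\ref{lem:prefix} gives that the corresponding restriction $\tup{\hist',\co'}$ (with $\co'$ the restriction of $\co$) satisfies them too, so $\hist'$ satisfies $I$ and the transition is enabled. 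I expect the fiddly points to be verifying that the built-up tuples meet the exact prefix definition, especially the handling of the single incomplete live log, and the bookkeeping that control flow in the replay matches the original run; but the essential ingredient that makes $\supseteq$ work is the prefix-closure Lemma~\ref{lem:prefix}.
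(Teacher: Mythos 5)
Your proposal is correct and takes essentially the same route as the paper's proof: the $\subseteq$ direction from $\Rightarrow_I$ being a constrained restriction of $\Rightarrow$ together with the observation that the write-read relation (and hence axiom satisfaction) is unaffected by the steps that follow reads, and the $\supseteq$ direction by replaying $h$ serially in the commit order $\co$ and discharging the \textsc{read-extern} side condition with the prefix-closure Lemma~\ref{lem:prefix}. The only cosmetic difference is that you maintain the $I$-satisfaction invariant after \emph{every} step, using the sink/commit-last argument for \textsc{spawn} and \textsc{write}, whereas the paper argues once from the latest \textsc{read-extern} configuration, noting that the write-read relation is identical from there to the final history; the two arguments are interchangeable, and yours is if anything slightly more explicit (also in checking that the replay reproduces the original control flow, which the paper leaves implicit).
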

 \begin{proof}(Sketch)
For the direction $\subseteq$, let $c_0 c_1\ldots c_n$ be an execution under $\Rightarrow_I$, where $c_n$ is a final configuration. We need to show that the history $\hist_n$ contained in $c_n$ belongs to $\histOf[*]{\prog}$ and that it satisfies $I$. The fact that $\hist_n\in \histOf[*]{\prog}$ is a direct consequence of the fact that $\Rightarrow_I$ is more constrained than $\Rightarrow$. To prove that $\hist_n$ satisfies $I$, let $c_j$ be the latest configuration in the execution that is obtained from $c_{j-1}$ through an application of \textsc{read-extern}. By the definition of this rule, the history $\hist_j$ in $c_j$ satisfies $I$. Since the write-read relation of $\hist_j$ is identical to that of $\hist_n$, any axiom of the form (\ref{eq:axiom}) satisfied by $\hist_j$ is also satisfied by $\hist_n$ (the set of instantiations of $\forall \tr_1$ and $\forall \alpha$ in (\ref{eq:axiom}) that satisfy the left part of the entailment are the same in $\hist_j$ and $\hist_n$). Therefore, $\hist_n$ satisfies $I$, which concludes this part of the proof.

For the reverse, let $\hist=\tup{T, \so, \wro}\in \histOf[*]{\prog}$ that satisfies $I$. Since $\hist$ satisfies $I$, there exists a commit order $\co$ such that $\wro\cup\so\subseteq \co$ and $\tup{h,\co}$ satisfies the axioms defining $I$. We show that there exists an execution $c_0 c_1\ldots c_n$ under $\Rightarrow_I$ where transactions are executed serially in the order defined by $\co$, such that $c_n$ is a final configuration that contains $\hist$. The only difficulty is showing that the \textsc{read-extern} transitions between two configurations $c_j$ and $c_{j+1}$ that add a write-read dependency $(\tr',\rd{\key}{\val})\in\wro$ are enabled even though the transaction log $\tr$ containing $\rd{\key}{\val}$ is ``incomplete'' in the history $\hist_j$ of $c_j$, and $\hist_j$ does not contain transactions committed after $\tr$. This relies on the prefix-closure property in Lemma~\ref{lem:prefix}.
Let $\co_j$ be the order in which transactions have been executed until $c_j$. Then, $\tup{\hist_j,\co_j}$ is a prefix of $\tup{\hist,\co}$, and $\tup{\hist_j,\co_j}\models I$ because $\tup{\hist,\co}\models I$.
 \end{proof}

\end{document}